\newtheorem{thmm}{Theorem}
\newtheorem{assm}{Assumption}
\title{\LARGE \bf
Detecting Switching Attacks On Traffic Flow Regulation For Changing Driving Patterns
}
\author[1]{Sanchita Ghosh}
\author[1]{Tanushree Roy}
\affil[1]{Department of  Mechanical Engineering, Texas Tech University, Lubbock, TX 79409, US. Emails:~{\tt\small sancghos@ttu.edu, tanushree.roy@ttu.edu}.}
\begin{document}

\maketitle
\thispagestyle{empty}
\pagestyle{empty}


\begin{abstract}
Modern traffic management systems increasingly adopt hierarchical control strategies for improved efficiency and scalability, where a local traffic controller mode is chosen by a supervisory controller based on the changing large-scale driving patterns.  Unfortunately, such local metering controllers are also vulnerable to cyberattacks that can disrupt the controller switching, leading to undesired, inefficient, and even unsafe traffic operations. Additionally, the detection of such attacks becomes challenging when the operational mode of the traffic is uncertain and the operational mode identification is delayed. 
Thus, in this work, we propose a cyberattack detection scheme to detect the compromised controller switching in ramp metering for an uncertain, multimodal macroscopic traffic operation of a freeway segment. In particular, we propose a bank of detectors corresponding to each admissible traffic mode that can compensate for the uncertain traffic mode of the freeway. Furthermore, we utilize backstepping tools along with Lyapunov function theory to achieve analytical performance guarantees for the detector, such as nominal exponential stability, anomaly/uncertainty-to-residual stability, robustness, and sensitivity. Finally, we demonstrate the efficacy of the proposed detection scheme through simulations of free traffic under realistic traffic parameters, uncertainties, and commonly occurring attack scenarios. 
\end{abstract}

\section{Introduction}
Intelligent traffic control integrates advanced real-time sensing, cloud-computation, smart decision making, and wireless communication technologies to ensure smooth traffic operations with extensive automation \cite{cassandras2016smart}. This emerging connectivity also introduces cyberthreats that may lead to large-scale resource disruptions, financial losses, or even potential safety risks \cite{perrine2019implications}. Therefore, detection of these cyberattacks is essential to facilitate effective and reliable traffic control \cite{roy2020secure, roy2021socio}. Yet, the task of cyberattack detection is especially challenging for multi-modal systems with unknown operational modes \cite{ghosh2023security}, such as changing traffic flow. 

The different operational modes for traffic are induced from uncertainties in traffic parameters (e.g., free flow velocity or traffic demand), where each mode can be interpreted as a distinct curve in a family of fundamental diagram curves \cite{fan2013comparative}. Several factors, including driver behavior, road condition, weather, or traffic incidents, contribute to these uncertainties in traffic operations \cite{tampere2007extended}. While precise knowledge of traffic parameters is imperative to obtain the optimum control for safe and efficient traffic operation, accurate information on traffic parameters is often unavailable in practice \cite{gu2022distributionally}. Therefore, local traffic control systems rely on sensor measurements to estimate real-time traffic states and parameters that can be used for traffic-responsive control generation \cite{zheng2023urban, tampere2007extended}. Moreover, coordinated and hierarchical controller structures are often preferred for network-wide control where the lower-level local controllers are connected to an upper-level central controller to ensure stable and efficient traffic flow at the network level while minimizing the implementation and computational cost \cite{li2005distributed,tsitsokas2023two}. The central controller monitors traffic behavior to identify the operational mode of the traffic and thus specifies a suitable switching command for the local controller that switches to the appropriate control set-point or policy \cite{cazorla2022ramp}. 

Unfortunately, such hierarchical control structures highly depend on receiving accurate commands from the central controller, which are susceptible to cyberattacks \cite{zheng2023urban}. For instance, \cite{reilly2016creating} illustrated how the vulnerabilities of the traffic control system can be exploited to create on-demand congestion or complex traffic patterns. Similarly, \cite{ghena2014green} showed that an adversary may 
obtain a clear passage through intersections on the freeway by manipulating traffic controllers. 
Moreover, \cite{wang2025effects} showed
that the throughput capacity of the road decreases significantly under cyberattacks on regional traffic. In \cite{dong2024evaluating}, the authors showed that both the road capacity and the traffic safety decrease in the presence of malicious autonomous vehicles. Additionally, \cite{perrine2019implications} illustrated that an adversary can launch a coordinated attack on a few traffic signals to increase the travel time delay 4.3 times, which can cost approximately \$15 per vehicle.

At the same time, there is a threat in multi-modal cyber-physical systems due to controller switching attacks \cite{ghosh2023security}. In our previous works \cite{ghosh2024cyberattack, ghosh2025assessment}, we illustrated that switching attacks on a connected autonomous vehicle platoon can lead to unsafe traffic operations with a high risk of rear-end collision or vehicle halt on the highway. However, the threat of switching attacks has only been explored for microscopic traffic systems. Thus, in this work, we focus on detection scheme formulation against the compromised switching of the ramp metering controller for macroscopic traffic infrastructures.

\subsection{Related works}
Modern traffic control systems inherit a variety of potential security threats and attack surfaces such as signal controllers, ramp meters, sensors, or malicious vehicles \cite{roy2023redundancies}. Consequently, the authors first formulated an optimization-based attack model for ramp meters and subsequently proposed a distributed observer-based detection scheme in \cite{menelaou2024stealthy}. Similarly, the authors derived analytical conditions to realize a stealth attack and also proposed a detection framework for detectable cyberattacks in \cite{roy2020secure}. In \cite{canepa2013spoofing}, authors utilized the mixed integer linear programming technique to identify data inconsistency under a spoofing cyberattack on probe vehicle-based traffic monitoring data. On the other hand, the authors adopted Gaussian processes (GP) to detect the presence of malicious vehicles that attempt to subtly change traffic parameters, e.g., traffic density \cite{kashyap2021detection}. In addition, the authors utilized physical sensor data and social data from the drivers and passengers to detect malicious vehicles, cyberattacks on ramp metering, and corrupt sensor data \cite{roy2023redundancies}. Moreover, in \cite{adelipour2025vulnerability}, the authors adopted the multiparty computation approach to develop a secret reconstruction integrated encryption scheme that can detect and recover from a cyberattack in traffic control systems.

\subsection{Research gap and contribution}
The aforementioned works consider cyberattack detection for a homogeneous traffic operation.
However, such frameworks fail to fully capture the real-world traffic behaviors \cite{shi2021physics}. Furthermore, the inherent uncertainties of driving patterns and traffic conditions are frequently unaddressed in previous works. Thus, these works do not address the possibility and associated challenges of detecting the compromised switching of local traffic controllers. 
To address the research gaps, the contributions of this work are as follows.
\begin{enumerate}
    \item We proposed a detection scheme for compromised switching of a multi-mode inlet ramp metering controller for freeway traffic, using a multimodal macroscopic traffic model. 
    \item The proposed scheme is constructed using a bank of detectors corresponding to each admissible traffic mode to compensate for uncertain driving patterns.
    \item We derived linear matrix inequalities (LMI) based design conditions for the gain of the detectors to achieve analytical performance guarantees. 
    \item We illustrated the effectiveness of the proposed scheme against two commonly occurring attacks, namely denial-of-service and false-data-injection, under realistic traffic conditions and uncertainties.
\end{enumerate}

The rest of this paper is organized as follows. Section~\ref{pf} presents the problem framework. Section~\ref{dd} introduces the proposed detection scheme along with the design criteria. Simulation results are exhibited in Section~\ref{sr}. Finally, Section~\ref{con} concludes the work.

\textbf{Notations}: In this work, 
the $L_2$ norm of a function $f$ is defined as $\lVert f\rVert_2^2 = \int_{0}^{\infty} \!\! {f(s)^2}ds$.  For a function $f(x,t)$,  $f_t = \frac{\partial f}{\partial t}$ and $f_x = \frac{\partial f}{\partial x}$. We define an integral operator $\mathcal{I}_{A} (f)$ for any kernel $A(x,z)$ and function $f(x,t)$ as $\mathcal{I}_A ({f}) = \int_{x}^{L} \! A(x,z)\, f(z,t)\, dz$.

\begin{figure*}[t!]
  \centering
  \includegraphics[width=0.75\textwidth]{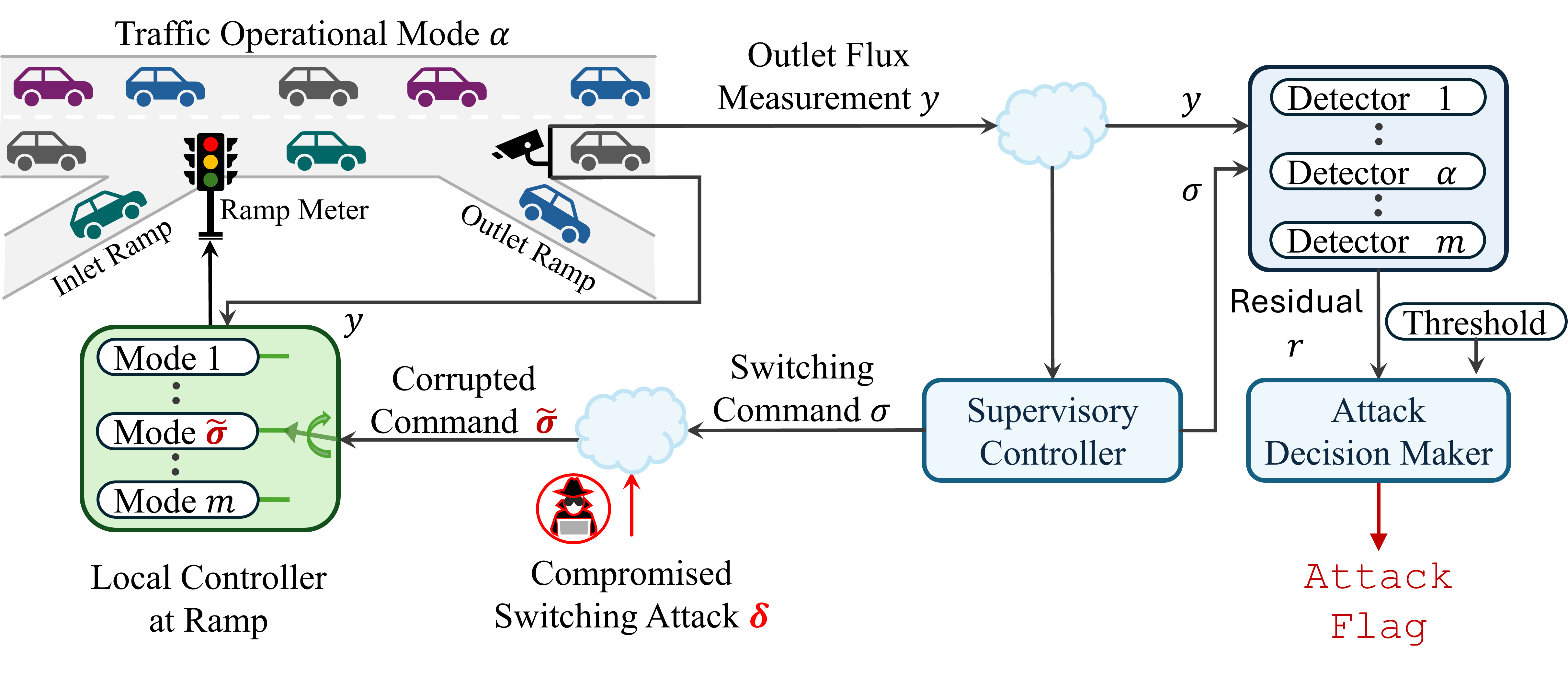}
  \caption{An overview picture for detecting switching cyberattacks on a ramp controller during multi-modal traffic operation.}
  \label{fig:Cybersecurity_scenario}
\end{figure*}
\section{ Problem Framework}\label{pf}
In this section, we specify the framework with a description of the system and its vulnerabilities. 

\subsection{Macroscopic traffic model under cyberattack}
In this work, we consider a freeway segment of length $L$ with an inlet ramp-metering at the beginning (i.\,e., $x$ = 0) and traffic sensors on the exit ramp at the end (i.\,e., $x$ = $L$).   In addition, we consider multiple modes $\alpha : [0,\infty) \rightarrow \mathcal{M} := [1, 2, \cdots, m]$ for the freeway traffic dynamics such that each mode $\alpha$ represents a distinct curve on the fundamental diagram \cite{fan2013comparative} and $m$ is the total number of admissible traffic modes. We employ the second order hyperbolic PDE Aw-Rascle-Zhang (ARZ) model \cite{aw2000resurrection} to realize this multi-modal macroscopic traffic dynamics for $x \in [0,L], \, t \in [0, \infty)$ as follows.
\begin{align}
    & q_t = - v q_x +\frac{q (\gamma p - v)}{v} v_x + \frac{q (v_f^\alpha - p - v)}{\tau v} +\eta_q, \label{flux} \\
    & v_t = (\gamma p - v) v_x +\frac{v_f^\alpha - p - v}{\tau} + \eta_v,  \label{velocity} \\
    & q(0)  =  q_s + U_{\sigma} + \delta, \quad v(L) = \frac{q (L)}{\rho^\alpha} ,\,\quad  y = q(L). \label{BC_nonlinearARZ}
\end{align} 
  Here, \(q(x,t)\), \(v(x,t)\), and $\rho (x,t)$ are respectively traffic flux, velocity, and density, and are continuously differentiable in $x \in (0,L), \, t \in [0, \infty)$.
 In particular, $\rho$ is the number of vehicles per unit length, and $q$ is the number of vehicles crossing a given point on the road per unit time such that $q = \rho v$. \eqref{BC_nonlinearARZ} represents the  the boundary conditions and measurement $y(t)$.
The traffic pressure \(p\) is an increasing function of  $\rho$ as:
\begin{align}
    p = v_f^\alpha \left( \frac{\rho}{\rho_m} \right)^{\gamma} =\frac{v_f^\alpha}{\rho_m^{\gamma}} \left( \frac{q}{v} \right)^{\gamma}, \quad \text{where} \,\,\gamma \in \mathbb{R}^+. \label{nonlinearP}
\end{align}
\(\rho_m\) is the maximum density, \( \tau \) is the relaxation time related to driving behavior,  $q_s$ is the mainline flux entering the freeway segment, and  $\eta_q, \,\eta_q \in \mathbb{R}$ capture the system uncertainties. 
$v_f^\alpha$ and $\rho^\alpha$ are the free flow velocity and desired density during traffic mode $\alpha$.

Similarly, the ramp meter also runs with different modes $\sigma \in \mathcal{M}$, where $\sigma$ is the piecewise constant switching command and obeys the average dwell time switching regulation \cite{Hespanha2003, song2025mode}. For a traffic mode $\alpha\in \mathcal{M}$, the supervisory controller chooses the corresponding ramp meter mode $\sigma=\alpha$ that is designed to appropriately control the inlet flux of the freeway traffic (in mode $\alpha$) and obtain smooth traffic with attenuated in-domain oscillations \cite{yu2019traffic}. Thus, the ramp controller output for mode $\sigma$ can be designed as  $U_{\sigma} = k^\sigma ( y - q^\sigma_\star)$, where  $ y - q^\sigma_\star$ is the measurement feedback and $k^\sigma$ is the controller gain.
The latter is designed such that the traffic dynamics settles to the steady-state points \(( q^\alpha_\star , v^\alpha_\star )\) during the traffic mode $\alpha$.

However, this operational traffic mode $\alpha$ remains uncertain due to unforeseen traffic attributes \cite{shi2021physics}. Hence, a supervisory controller monitors the sensor data to determine the operational traffic mode $\alpha$ and generates a switching command  $\sigma$ to choose the appropriate mode for the ramp meter \cite{Hespanha2003}. Data from the outlet ramp sensors to the supervisory controller and the switching command from the supervisory controller to the ramp meters are transmitted over communication networks. In this work, we consider that the adversary can corrupt the switching command $\sigma$ via the network 
and the adversarial switching command can be defined mathematically as $\widetilde{\sigma} \in \mathcal{M}\setminus \sigma,$ such that $\tilde{\sigma}=\sigma+\Delta_{\sigma}$ and $\Delta_{\sigma} \in [-(m-1), \cdots,  m-1]$. Thus, during a cyberattack, the ramp meter runs with an inaccurate mode $\tilde{\sigma}$, and the inlet flux of the freeway segment is controlled with $U_{\tilde{\sigma}}$ instead of $U_{\sigma}$. Essentially, this $U_{\tilde{\sigma}}$ will force the traffic dynamics away from the desired steady-state points \(( q^\alpha_\star , v^\alpha_\star )\). We rewrite the compromised control input $U_{\tilde{\sigma}} = U_{\sigma} + \delta$, where the switching attack $\delta$ is an unknown time function and $\delta \neq 0$ for $\Delta_{\sigma} \neq 0$ in the presence of cyberattacks and vice versa. 
Fig~\ref{fig:Cybersecurity_scenario} illustrates our problem framework of multi-modal traffic operation under the switching cyberattacks on the ramp controller.

Now, we linearize the nonlinear hyperbolic ARZ model \eqref{flux}-\eqref{BC_nonlinearARZ}  near steady-state points \(( q^\alpha_\star , v^\alpha_\star )\) for each mode $\alpha$ with small perturbations as $\tilde{q}(x,t) \!=\! q(x,t)\! -  q^\alpha_\star 
, \,\, \tilde{v}(x,t) \!=\! v(x,t)\! -  v^\alpha_\star .$
We introduce change of variables as \cite{yu2019traffic}:
\begin{align}
   & W  = f\left(\tilde{q}, \tilde{v}\right) =  e^{\left(\frac{x}{\tau  v^\alpha_\star}\right)} \left[ \tilde{q} - \frac{ q^\alpha_\star h^\alpha}{\gamma  p^\alpha_\star} \, \tilde{v}\right],  \label{newVariableW} \\
   & V = g\left(\tilde{v}\right) =  \frac{ q^\alpha_\star}{\gamma  p^\alpha_\star} \tilde{v}, \,\,\, \text{for} \,\,\,p^\alpha_\star = \frac{v_f^\alpha}{\rho_m^{\gamma}} \left( \frac{q^\alpha_\star}{v^\alpha_\star} \right)^{\gamma}.  \label{newVariableV}
\end{align}
\begin{figure*}[h]\centering
\, \subfloat[][]
{\label{fig.srCarNom}\includegraphics[width=.46\linewidth]{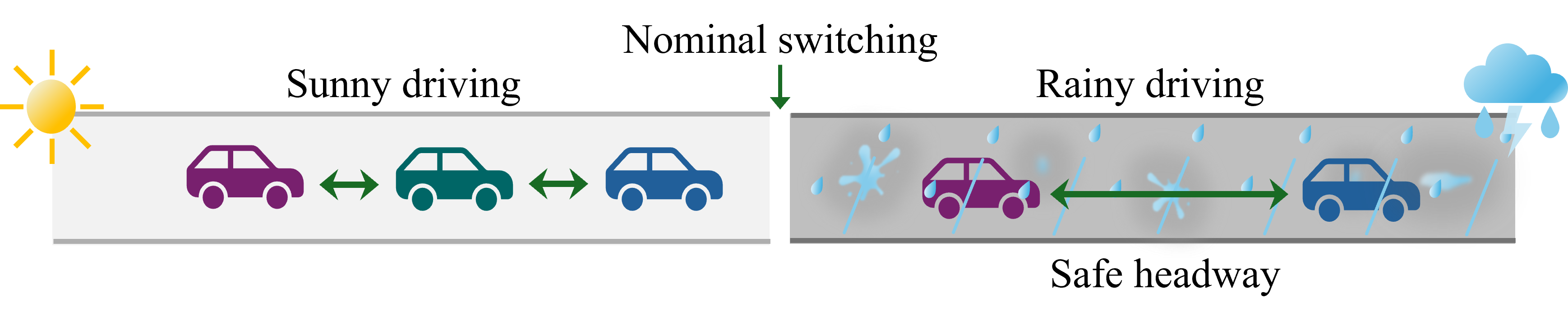}}
\,\subfloat[][]
{\label{fig.srCarDoS}\includegraphics[width=.46\linewidth]{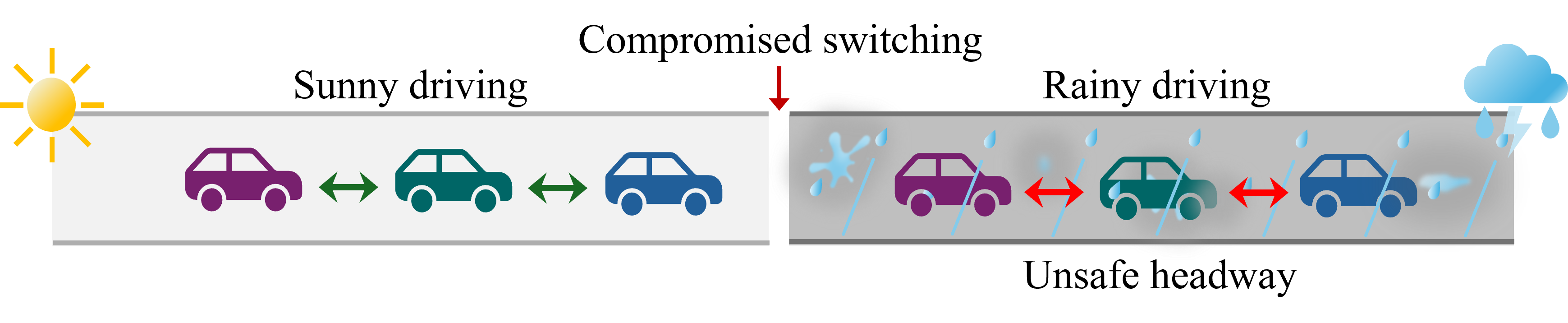}} \\\vspace{-3mm}
\subfloat[][]
{\label{fig.sr4subNom}\includegraphics[trim=0pt 0pt 00pt 0pt, clip, width=.49\linewidth]{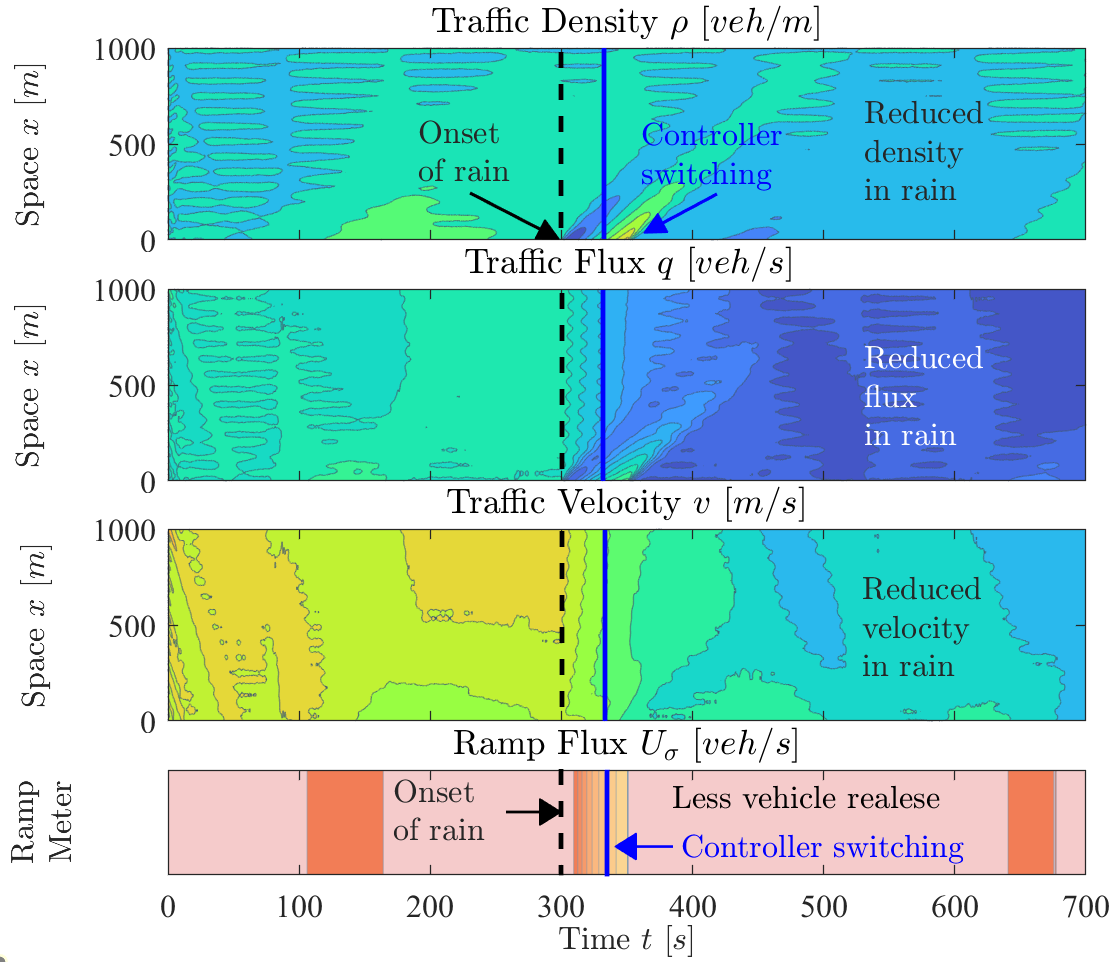}}
\hfill
\subfloat[][]
{\label{fig.sr4subDoS}\includegraphics[trim=00pt 0pt 0pt 0pt, clip, width=.49\linewidth]{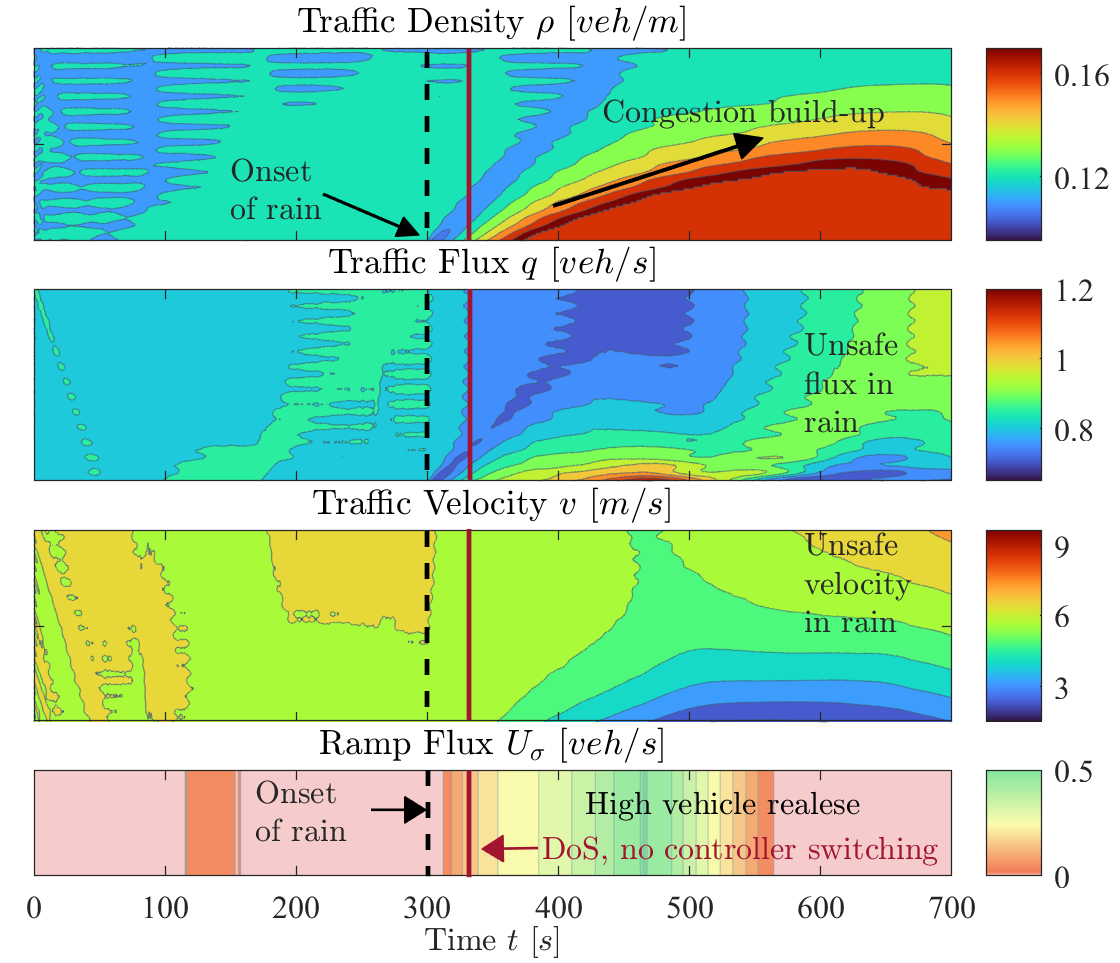}}
\caption{Figure depicts the impact of controller switching on traffic behavior under nominal conditions and DoS attack. (a) Illustrates the safe headway extended from sunny to rainy weather under nominal switching, and (c) shows reduced traffic density, flux, and velocity under adjusted ramp metering that compensates for the adverse weather. Conversely, (b) shows the unsafe headway under a wrong controller in rainy weather, while (d) demonstrates the congestion build-up along with unsafe traffic flux and velocity due to unadjusted ramp metering under DoS attack, i.e., no controller switching.}\label{fig.impact}
\end{figure*}
Next, we utilize the definitions \eqref{newVariableW}- \eqref{newVariableV} to map the \eqref{flux}-\eqref{BC_nonlinearARZ} system to the linear hyperbolic system: 
\begin{align}
    & W_t (x,t) = -  v^\alpha_\star W_x (x,t) + \eta_W, \label{omegaF} \\
    & V_t (x,t) =  v^\alpha_\star h^\alpha V_x (x,t) + \beta^\alpha (x) W (x,t) + \eta_V, \label{nuF} \\
    & W (0,t) = - h^\alpha V (0,t) + q_s + U_\sigma -  q^\alpha_\star + \, \delta,  \label{BC_omegaNUF} \\
    & V (L,t) = l^\alpha W (L,t), \quad \, \, \overline{y} (t) = c^\alpha W (L,t). \label{y_omegaNUF}
\end{align}
Here, $ \eta_W = f \left(\eta_q, \eta_v\right)$, $\eta_V =  g\left(\eta_v\right)$, $h^\alpha = \frac{\gamma  p^\alpha_\star -  v^\alpha_\star}{ v^\alpha_\star}$, $\beta^\alpha(x) = \frac{-e^{\frac{-x}{\tau  v^\alpha_\star}}}{\tau} $, $ l^\alpha =   e^{\frac{-L}{\tau  v^\alpha_\star}}$,  $c^\alpha = \frac{\gamma  p^\alpha_\star}{ v^\alpha_\star}l^\alpha$, and $\overline{y} = y -  q^\alpha_\star $. Next, we present the cyberattack policies for our framework and mathematically define cyberattack $\delta$.

\subsection{Cyberattack policies and their impact} \label{impact_sec}
In this framework, we consider that the adversary deliberately compromises the switching command generated by the supervisory controller to achieve their malevolent objectives, such as large-scale and unsafe traffic disruptions, on-demand congestion, or selective traffic advantages (e.g., securing a clear passage) \cite{reilly2016creating, ghena2014green}. They may activate a wrong controller to lower the traffic influx from the ramp than desired to decrease the throughput capacity of the freeway. Alternatively, the adversary can increase the traffic influx from the ramp than desired,  which will result in lower and unsafe vehicle headway \cite{shi2021constructing}. The adversary may also manipulate the traffic influx to generate specific scenarios such as bottle-neck or stop-go traffic. Moreover, the adversary may opt for several attack policies such as denial-of-switching (DoS), false-data-injection (FDI), or replay attack \cite{ghosh2024cyberattack}.

Let us consider Fig~\ref{fig.impact}. Adverse weather conditions such as rain, snow, and fog significantly decrease free-flow velocity due to reduced visibility and adhesion \cite{nigam2022empirical}.  Consequently, in rainy weather, vehicles run slower while keeping a larger gap ahead, as shown in Fig~\ref{fig.srCarNom}. This leads to a reduction in traffic influx and velocity. The black dotted line on Fig~\ref{fig.sr4subNom} marks this change in the driving pattern. 
The supervisory controller identifies this altered traffic behavior within 40s and generates the appropriate switching command to ensure traffic safety. 
Consequently, the local controller switches mode accordingly, such that the ramp meter is adjusted to release fewer vehicles from the ramp and compensate for such adverse conditions. This delayed nominal switching is indicated by the blue line on Fig~\ref{fig.sr4subNom}, and the bottom plot shows the adjusted ramp metering. This ensures safe traffic operation with reduced density, flux, and velocity, as shown in the top three plots of Fig~\ref{fig.sr4subNom}.

\vspace{-2mm}During a DoS attack, the adversary may prevent the local controller from switching to the appropriate mode in response to the onset of rain and thus force the ramp meter to release more vehicles from the ramp. The dark red line on Fig~\ref{fig.sr4subDoS} indicates this DoS attack, which leads to the unadjusted ramp metering shown in the bottom plot. This results in a congestion build-up as seen from the red contour on the top plot of Fig~\ref{fig.sr4subDoS}. Moreover, this DoS attack may result in unsafe traffic operation with higher velocity and hazardously small headway in rainy weather. Fig~\ref{fig.srCarDoS} and Fig~\ref{fig.sr4subDoS} capture this scenario.  Next, we introduce our proposed cyberattack detection scheme to detect the controller switching attack $\delta$.

\vspace{-3mm}
\section{Cyberattack Detection Scheme}\label{dd}

\subsection{Detector and error dynamics}
\vspace{-3mm}
In this work, we propose a detection scheme that can reliably detect the presence of corrupted switching with analytical guarantees. In this framework, the operational mode of the traffic is uncertain and is estimated by the supervisory controller for choosing the correct local controller mode. Therefore, as the operational mode of the traffic switches, there will be a delay in identifying this system mode. Now, if a single detector is used such that its mode of operation ``matches" the traffic mode, reliable detection of attacks will be dependent on the instantaneous identification of the correct system mode by the supervisory controller. Any temporary loss of synchronicity between the operational traffic mode and the mode of the detector will be flagged as an anomaly in that scenario.
Therefore, to isolate the effect of switching attacks from anomalies due to system switching, we propose a bank of $m$ detectors corresponding to the $m$ admissible traffic modes for the freeway. In particular, we propose this detector bank based on the output injection-based residual generation approach. 
To realize this proposed structure, we define the dynamics of the $j^{th}$ detector $\forall j \in \mathcal{M}$ based on the linear hyperbolic traffic dynamics \eqref{omegaF}-\eqref{y_omegaNUF} as:
 \begin{align}
    & \hat{W}^j_t (x,t) = -  v^j_\star \hat{W}^j_x (x,t) + k_1^j (x) \zeta^j, \label{omegaHat} \\
    & \hat{V}^j_t (x,t) =  v^j_\star h^j \hat{V}^j_x (x,t) + \beta^j \hat{W}^j (x,t) + k_2^j (x) \zeta^j, \label{nuHat} \\
    & \hat{W}^j (0,t) = - h^j \hat{V}^j (0,t) + q_s + U_\sigma -  q^j_\star,  \label{BC_omegaNUhat} \\
    &  \hat{V}^j (L,t) = l^j \hat{W}^j (L,t) + k_3^j (x) \zeta^j. \label{y_Hat} 
\end{align}
Here, for the $j^{th}$ detector, $\zeta^j \!= \!\overline{y} \!- \! c^j \hat{W}^j (L,t)$ is the output, $k_1^j$, $k_2^j$, and $k_3^j$ are the detector gains to be designed. 
Now, the errors in $W$ and $V$ are given by $ \omega^j = W - \hat{W}^j$ and $ \vartheta^j = V - \hat{V}^j$, respectively. Then, we can obtain the error dynamics for the $j^{th}$ detector from \eqref{omegaF}-\eqref{y_omegaNUF} and \eqref{omegaHat}-\eqref{y_Hat} as:
 \begin{align}
    & \omega^j_t\! (x,t)\! = \!-  v^\alpha_\star \omega^j_x \!(x,t)\!+ \!\eta_W \!  + \!\theta_v  \hat{W}^j_x \!(x,t)\! - \!k_1^j (x) \zeta^j, \label{omegaErr} \\
    & \vartheta^j_t (x,t) =  v^\alpha_\star h^\alpha \vartheta^j_x (x,t) + \beta^\alpha (x) \omega^j (x,t) +  \eta_V \nonumber \\
    & \quad \qquad  + \theta_{h,v} \hat{V}^j_x (x,t)  + \theta_\beta \hat{W}^j (x,t)  - k_2^j (x) \zeta^j, \label{nuErr}  
    \end{align}
    with the boundary conditions and output obtained as
     \begin{align}
    & \omega^j (0,t) = - h^\alpha \vartheta^j (0,t) + \theta_h \, \hat{V}^j (0,t) - \theta_q + \delta,  \label{BC_omegaErr} \\
    & \vartheta^j (L,t) = l^\alpha \omega^j (L,t) + \theta_l \, \hat{W}^j (L,t) - k_3^j (x) \zeta^j, \label{BC_omegaErr2} \\
    & \zeta^j = c^\alpha \omega^j (L,t) + \theta_c  \hat{W}^j (L,t). \label{yErr}
\end{align}
Finally, we define the residual of the detection scheme as the minimum among the detector outputs, i.e.
\begin{align}
    r(t)\! =\! \min\limits_j \lvert \zeta^j (t)\rvert ,\! \text{ which implies }{\lVert r \rVert}^2_2 \leqslant {\lVert \zeta^j \rVert}^2_2, \, \forall j. \label{residual}
\end{align}
This residual $r(t)$ is used to detect the presence of the controller switching attack $\delta$, and an attack decision is made only if and when the residual $r$ crosses a predefined threshold $J$. The threshold ${J}$ can be defined by fixing the false alarm rate  i.\,e., the probability $P_{D}$  of the residual  $r$ crossing the threshold ${J}$  under nominal operation \cite{Ding}. 
Mathematically, we can write
\begin{align}
    P_{D} & = P\left( r > {J} | \delta  = 0 \right)  = \int_{J}^{\infty}\!\!\! P(r_{\eta}) \,dr_{\eta}, \label{threshold}
\end{align}  
where $P(r_{\eta})$ is the probability distribution of the generated residual data $r_\eta$. For instance, to achieve a false alarm rate of 15\% we can set the threshold ${J}$ such that $P_{D} < 0.15$. 
For the proposed structure, one detector from the bank of $m$ detectors always matches the traffic dynamics (even under delayed traffic mode identification), and that $(j= \alpha)^{th}$ detector generates the least output \cite{ghosh2023security}. Thus, this bank of detector structure ensures that the residual is small enough when $\delta =0$. Contrarily, under corrupted ramp metering, all detectors run with the accurate $U_{\sigma}$, while the traffic dynamics runs with compromised boundary control $U_{\tilde{\sigma}} = U_{\sigma} + \delta$. This incoherence leads to a higher residual $r$  for $\delta \neq 0$ and can be used for attack detection using threshold crossing.
Fig. \ref{fig:Cybersecurity_scenario} illustrates the proposed structure of the detection scheme with a bank of $m$ detectors and an attack decision maker. Next, we present the mathematical analysis to obtain the LMI-based design criteria for the proposed detection scheme.

\subsection{Analytic guarantees for detector performance}
We consider four performance criteria for the residual $r(t)$ to guarantee the performance of the detection scheme. They are: nominal exponential stability (ES), anomaly/uncertainty-to-residual stability (AURS), robustness against uncertainties, and sensitivity to cyberattacks. These criteria are defined as follows \cite{Troy_actuator_anomaly}.

\textbf{ES:} Under nominal traffic control with no uncertainty (\mbox{$\delta, \eta_q, \eta_v = 0$}) and design parameters $\Upsilon_1, \Upsilon_2 \!\in \!\mathbb{R}^+$ 
\begin{align}
    r^2(t) \leqslant \Upsilon_2 \,r^2(0) \, e^{-\Upsilon_1 t}. \label{ES_criteria}
\end{align}
\textbf{AURS:} Under compromised traffic control and uncertainties (\mbox{$\delta, \eta_q, \eta_v \neq 0$}) and for design parameters $\Upsilon_2,\Upsilon_3,\Upsilon_4 \!\in \! \mathbb{R}^+$
\begin{align}
    r^2(t) & \leqslant \Upsilon_2 \,r^2(0) \, e^{-\Upsilon_3 t} \nonumber \\
    & \,\, + \Upsilon_4 \sup\limits_{t\geqslant0} \left[ \delta^2 + \lVert \eta_v \rVert^2 + \lVert \eta_q \rVert^2 + \eta_q^2(L,t) \right ]. \label{DSS_criteria}
\end{align}
\textbf{Robustness:} Under nominal traffic control ($\delta= 0$) and uncertainties {$\eta_q, \eta_v \! \neq \!0$}, and design parameter $\!\Upsilon_5 \! \in \!\mathbb{R}^+$
\begin{align}
    \int_0^\infty \!\!\!\!\!\! r^2 dt \leqslant \Upsilon_{5} \int_0^\infty \!\!\!\!\!\!  \left[ \lVert \eta_v \rVert^2 + \lVert \eta_q \rVert^2 + \eta_q^2(L,t) \right ] dt + \epsilon. \label{robust def r}
\end{align}
\textbf{Sensitivity:} Under compromised traffic control $\delta \! \neq \!0$, no uncertainty ({$ \eta_q,\! \eta_v \!\!= \!\!0$}), and design parameter $\Upsilon_6 \!\in \!\mathbb{R}^+$
\begin{align}
    \int_0^\infty r^2 dt \geqslant \Upsilon_{6} \int_0^\infty \delta^2 dt - \epsilon. \label{sen def r}
\end{align}
In particular, the criteria \eqref{ES_criteria}-\eqref{sen def r} ensures the following.
\begin{enumerate}
    \item \textit{Exponential convergence} of the residual $r(t)$ under nominal operations, no uncertainties ($\delta, \eta_q,\eta_v = 0$).
    \item \textit{Boundness} of the detectors in all circumstances with bounded disturbances, including $\delta, \eta_q,\eta_v \neq 0$.
    \item \textit{Minimized false alarms} as the residual $r(t)$ stays below the threshold during nominal traffic operations under uncertainties ($\delta =0, \eta_q,\eta_v \neq 0$).
    \item \textit{Reduced misdetection} as the residual $r(t)$ crosses the threshold under a controller switching attack without uncertainties ($\delta \neq 0, \eta_q,\eta_v = 0$).
\end{enumerate}
Now, we will derive LMI-based conditions for detector gains $k_1^j, $ $ k_2^j$, and $k_3^j, \forall j \in \mathcal{M}$ to ensure desired performance for the detection residual $r$ \eqref{residual} such that \eqref{ES_criteria}-\eqref{sen def r} are satisfied. To achieve this, we first apply the backstepping strategy to decouple the coupled error dynamics \eqref{omegaErr}-\eqref{BC_omegaErr2}.

\subsection{Backstepping transformation of error dynamics}
Using standard methodology \cite{roy2020secure}, we obtain the backstepped system \eqref{phi_tx}-\eqref{BC_psi_l} that decouples the error dynamics \eqref{omegaErr}-\eqref{BC_omegaErr2}.
\begin{align}
    & \phi^j_t (x,t) = -  v^\alpha_\star \phi^j_x (x,t)+ a^j (x,t) + \theta_\phi (x,t),   \label{phi_tx} \\
    & \psi^j_t (x,t) =  v^\alpha_\star h^\alpha \psi^j_x (x,t) + b^j (x,t) + \theta_\psi (x,t),  \label{psi_tx} \\
    & \phi^j (0,t) = - h^\alpha \psi^j (0,t) + \theta_h \, \hat{V}^j (0,t) - \theta_q + \delta,  \label{BC_phi_zero} \\
    & \psi^j (L,t) = {\lambda}^j  \phi^j (L,t) + \theta_k \, \hat{W}^j (L,t). \label{BC_psi_l}
\end{align}
The backstepping transformation is given by 
\begin{align}
     & \omega^j(x,t) = \,\,\, \mathcal{R}^j (\phi)\,\,\, = \phi^j(x,t) - \mathcal{I}_{R^j} ( \phi^j), \label{phi2omega} \\
     & \vartheta^j (x,t) =  \mathcal{S}^j (\psi, \phi)  = \psi^j(x,t) \! - \mathcal{I}_{S^j} ( \phi^j). \label{psi2v}
\end{align}
Here, $a^j $ and $b^j $ is related to the uncertainties as $ \eta_W = \mathcal{R}^j (a^j)$ and $\eta_V = \mathcal{S}^j (b^j, a^j)$. Similarly,
$\theta_v \, \hat{W}^j_x (x,t) -  \theta_c \, k_1^j  \hat{W}^j (L,t)=  \mathcal{R}^j ( \theta_\phi)$ and $ \theta_{h,v} \hat{V}^j_x (x,t)  + \theta_\beta \hat{W}^j (x,t)   -  \theta_c \, k_2^j  \hat{W}^j (L,t) = \mathcal{S}^j (\theta_\psi, \theta_\phi)$. 
We can rewrite $\zeta^j$ as
\begin{align}
    & \zeta^j =  c^\alpha \phi^j (L,t) + \theta_c  \hat{W}^j (L,t). \label{Err_phi_psi}
\end{align}
The kernel functions in \eqref{phi2omega}-\eqref{psi2v} can be derived to be
\begin{align}
    R^j(x,z) &=  \frac{h^\alpha}{\tau \gamma  p^\alpha_\star} \exp{\left(\frac{h^\alpha \left( x-z\right)}{\tau \gamma  p^\alpha_\star}\right)}, \label{R_def}\\
    S^j (x,z) & = - \frac{1}{\tau \gamma  p^\alpha_\star} \exp{\left(\frac{- x- h^\alpha z}{\tau \gamma  p^\alpha_\star}\right)}. \label{S_def}
\end{align}
Then, the detector gains $k_1^j$ and $k_2^j$ can be expressed in terms of kernels as 
\begin{align}
    k_1^j (x)  & = \frac{ v^\alpha_\star}{c^\alpha} R^j(x,L), \,\quad k_2^j (x)   = \frac{ v^\alpha_\star}{c^\alpha} S^j(x,L).
\end{align}
The detector gain $k_3^j$ is related to $\lambda^j$ \eqref{BC_psi_l} as
 \begin{align}
    {\lambda}^j = l^\alpha - k_3^j. \label{lambda}
\end{align}
Furthermore, the backstepping transformations \eqref{phi2omega}-\eqref{psi2v} are invertible using the transformations \eqref{omega2phi}-\eqref{v2psi}.
\begin{align}
    & \phi^j(x,t) = \mathfrak{M}^j (\omega^j) =  \omega^j(x,t) - \mathcal{I}_{M^j} ( \omega^j), \label{omega2phi} \\
    & \psi^j (x,t) = \mathfrak{N}^j (\vartheta^j,\omega^j) = \vartheta^j(x,t) - \mathcal{I}_{N^j} ( \omega^j). \label{v2psi}
\end{align}
Using these maps \eqref{omega2phi}-\eqref{v2psi} and comparing the error dynamics \eqref{omegaErr}-\eqref{BC_omegaErr2} with the backstepped dynamics \eqref{phi_tx}-\eqref{BC_psi_l}, we obtain the relations bellow
\begin{align}
        &  a^j (x,t) =   \mathfrak{M}^j (\eta_q ), \qquad \quad  b^j (x,t) =    \mathfrak{N}^j (\eta_v, \eta_ q), \label{a2dq} \\
    & \theta_\phi^j (x,t)  = \mathfrak{M}^j (\theta_v \hat{W}^j_x  ) -  \theta_c \mathfrak{M}^j (k_1^j ) \hat{W}^j (L,t), \label{thetaPhiker} \\   
    & \theta_\psi^j (x,t) = \mathfrak{N}^j( \theta_{h,v} \hat{V}^j_x (x,t),    \theta_v \hat{W}^j_x) + \theta_\beta \hat{W}^j (x,t)  \nonumber \\
    &\qquad \quad - \theta_c[k_2^j     -  \mathcal{I}_{N^j} ( k_1^j)] \hat{W}^j (L,t). \label{thetaSIkernel} 
\end{align}
Furthermore, the kernels in  \eqref{omega2phi}-\eqref{v2psi} are given by 
\begin{align}
    M^j(x,z) = - R^j(x,z), \quad N^j(x,z) = - S^j(x,z). \label{MN_def}
\end{align} 
We note that the detailed derivation of \eqref{phi_tx}-\eqref{BC_psi_l} has been omitted for lack of space. 

\subsection{Design of detector gains}
Since the initial errors in states for the detectors will always remain bounded, we will assume the following. 
\begin{assm} \label{IC_assm}
    We assume that the initial conditions for the backstepped system \eqref{phi_tx}-\eqref{BC_psi_l} are bounded as 
    \begin{align}
     0 \!\leqslant \underline{\phi}{^j_0}^2 (x) \!\leqslant  \phi{^j}^2 (x,0)\! \leqslant \overline{\phi}{^j_0}^2 (x) \!\leqslant \!\infty, \,\,\, \forall x \in [0,L]. \label{ICBC} 
    \end{align}
\end{assm}
Using the second order decoupled linear PDE \eqref{phi_tx}-\eqref{BC_psi_l} and Assumption~\ref{IC_assm}, we now present our main Theorem that provides the design conditions on gains $k_1^j, $ $ k_2^j$, and $k_3^j, \forall j \in \mathcal{M}$ as LMIs, to ensure reliable detection performance with regards to criteria \eqref{ES_criteria}-\eqref{sen def r}.

\begin{thmm} \label{DSS_th} 
    Consider a multi-modal freeway traffic modeled by \eqref{flux}-\eqref{BC_nonlinearARZ} and the corresponding bank of $m$  detectors \eqref{omegaHat}-\eqref{y_Hat}. The detection scheme is considered ES \eqref{ES_criteria}, AURS \eqref{DSS_criteria}, robust against uncertainties \eqref{robust def r}, and sensitive towards cyberattacks \eqref{sen def r} for every mode $\alpha$, if there exists a  constant $\xi\in \mathbb{R^+}$ such that $\forall \alpha,j \,\in \mathcal{M}$:
    \begin{align}
         0\leqslant \mu_1  k_3^{j} & \leqslant \Upsilon_7^\alpha, \,\quad \quad \quad \quad \text{[ES \& AURS  criteria]} \label{dssThc}\\
         \Lambda_1^j & \leqslant 0, \quad \quad \quad \quad \,\,\text{[Robustness criterion]} \label{rbostThc}\\
         \Lambda_2^j & \leqslant 0, \quad \quad \quad \quad \,\,\, \text{[Sensitivity criterion]} \label{sensThc}
    \end{align}
where,
\begin{align}
    & \Upsilon_7 = \sqrt{\frac{  v^\alpha_\star - \mu_2 e^{-L}- {c^\alpha}^2e^L}{\mu_3  v^\alpha_\star\,h^\alpha\,e^{2L}}- \mu_1 e^{-\frac{2L}{\tau  v^\alpha_\star}}}, \label{upK4J}\\
    & \Lambda_1^{j} = \text{diag}\{\mu_4 \!- \!\!\overline{\Upsilon}_8^{\alpha,j}, \,-\Upsilon_9, -\Upsilon_{10}, -\Upsilon_{11}, -\Upsilon_{12} \}, \label{robLam}\\
    & \Lambda_2^{j} = \begin{bmatrix}
    \overline{\Upsilon}_8^{\alpha,j} -\Upsilon_{16} & \qquad -1 & \qquad -1 \\
    -1 &  \qquad \Upsilon_{17}    & \qquad 0 \\
    -1 &\qquad 0 &  \qquad \Upsilon_{18} 
\end{bmatrix}. \label{senLam}
\end{align}
Table \ref{tab:thm_var} lists the parameters for \eqref{upK4J}-\eqref{senLam}, where $\mu_i > 0$, $i\in \{1,\cdots,16\}$ are free tuning parameters to be designed.
\renewcommand{\arraystretch}{1.3}
\begin{table}[h!]
    \centering
            \caption{List of parameters in \eqref{upK4J}-\eqref{senLam}.}
    \begin{tabular}{|l|l|}
        \hline
        $\overline{\Upsilon}_8^{\alpha,j}$ = $ \Upsilon_{8}^{\alpha,j}\frac{ \xi}{2 e^L} $ &$\Upsilon_{8}^{\alpha,j}$ = $\frac{ v^\alpha_\star} {\xi} - \frac{ {\mu}_3  v^\alpha_\star  h^\alpha  {\lambda^j}^2}{e^{-2L} \xi} - \frac{{\mu}_2 }{e^{L}\xi}$  \\
        \hline
        $\Upsilon_{9} $ = $\Upsilon_5 - \mu_5$ &    $\Upsilon_{13}^j $ = {$ \mu_7 + L^2\frac{ \mu_7 h{^\alpha}^2 + \mu_5 }{(\tau \gamma  p^\alpha_\star)^2} $  }  \\
        \hline
         $\Upsilon_{10} $ = $\Upsilon_5 - \Upsilon_{13}^j $   &$ {\Upsilon}_{14} $ = {$ \mu_{8} + L^2\frac{ \mu_{8} h{^\alpha}^2 + \mu_{13} }{(\tau \gamma  p^\alpha_\star)^2} $  } \\
         \hline
       $\Upsilon_{11} $ = $ \Upsilon_5 -\frac{\mu_{6} \xi{^j}^2}{2}$ & $\Upsilon_{17} $ = $\mu_{12}  v^\alpha_\star  +\Upsilon_{6} + \mu_{15}$\\
        \hline
       $\Upsilon_{12} $ = $\Upsilon_5 - \mu_4 -\Upsilon_{\theta}$   & $ {\Upsilon}_{18} $ = {$ {\Upsilon_\theta}+ {\Upsilon_6} + \mu_{16} + 1$  }\\
        \hline
         \multicolumn{2}{|l|}{ $\Upsilon_\theta$ =  $\quad\max({\mu}_{10}  v^\alpha_\star,  {\mu}_{11}  v^\alpha_\star, {\mu}_{13},{\mu}_{14},{\Upsilon}_{14},\Upsilon_{15})$}\\
        \hline
        \multicolumn{2}{|l|}{ $\Upsilon_{15}$ = $\,\, {\Upsilon}_{14} \lVert k_1^j  \rVert^2 + {\mu}_{13}  \lVert k_2^j  \rVert^2 + {{\mu}_3 { v^\alpha_\star  h^\alpha e^{L} }  \theta_k^2 }{\theta_c^2}^{-1} + {\mu}_{14}$}\\
        \hline
    \end{tabular}
    \label{tab:thm_var}
\end{table}
\end{thmm}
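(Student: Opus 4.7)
The plan is to analyze the decoupled backstepped error dynamics \eqref{phi_tx}--\eqref{BC_psi_l} via a weighted Lyapunov functional and then translate the bounds back to the residual using $r(t) \le |\zeta^\alpha(t)|$ together with the observation that, for the matching detector $j=\alpha$, all parameter-mismatch terms $\theta_{(\cdot)}$ vanish and the output \eqref{Err_phi_psi} reduces to $\zeta^\alpha = c^\alpha \phi^\alpha(L,t)$. The backstepping invertibility \eqref{omega2phi}--\eqref{v2psi} ensures $L_2$-norm equivalence between $(\omega^j,\vartheta^j)$ and $(\phi^j,\psi^j)$, so conditions derived on the latter transfer to the original error states. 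Assumption~\ref{IC_assm} supplies the required initial-data bound.

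I would introduce a Lyapunov functional of the form
\begin{align*}
V^j(t) = \mu_1 \int_0^L e^{-\frac{2x}{\tau v^\alpha_\star}} (\phi^j(x,t))^2 \, dx + \mu_3 \int_0^L e^{2x} (\psi^j(x,t))^2 \, dx,
\end{align*}
with weights chosen so that integrating the transport terms $-v^\alpha_\star \phi^j_x$ and $v^\alpha_\star h^\alpha \psi^j_x$ by parts produces pointwise-in-$x$ negative interior contributions. Differentiating $V^j$ along \eqref{phi_tx}--\eqref{psi_tx}, substituting the boundary conditions \eqref{BC_phi_zero}--\eqref{BC_psi_l}, and using $\lambda^j = l^\alpha - k_3^j$ from \eqref{lambda} collects the residual $x=L$ boundary contribution into a quadratic form in $\phi^\alpha(L,t)$. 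For the matching mode with $\delta,\eta_q,\eta_v=0$, the coefficient of that quadratic is exactly the expression under the square root in $\Upsilon_7^\alpha$ \eqref{upK4J}; enforcing $0 \le \mu_1 k_3^j \le \Upsilon_7^\alpha$ as in \eqref{dssThc} yields $\dot V^\alpha \le -\kappa V^\alpha$, which combined with $r \le |\zeta^\alpha|$ and the norm equivalence delivers the ES bound \eqref{ES_criteria}.

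For AURS \eqref{DSS_criteria} I would retain the forcing terms $a^j, b^j, \theta_\phi, \theta_\psi$ and the attack $\delta$ and apply Young's inequality on each cross term to absorb them into $\sup_{t \ge 0}\!\big[\delta^2 + \lVert \eta_q \rVert^2 + \lVert \eta_v \rVert^2 + \eta_q^2(L,t)\big]$; the same LMI \eqref{dssThc} still dominates the homogeneous part. Robustness \eqref{robust def r} follows by integrating the AURS-type inequality over $[0,\infty)$ with $\delta=0$ and regrouping so that the coefficient matrix on the disturbance vector is exactly $\Lambda_1^j$ in \eqref{robLam}; $\Lambda_1^j \le 0$ then yields the required finite $L_2$-gain with constant $\Upsilon_5$. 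Sensitivity \eqref{sen def r} is obtained by a dual manipulation: I would lower-bound the contribution driven by $\delta$ through \eqref{BC_phi_zero}, writing the resulting quadratic form as $\Lambda_2^j$ in \eqref{senLam}, so that $\Lambda_2^j \le 0$ gives the reverse inequality.

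The main obstacle is reconciling the $\min_j$ structure of $r(t)$ \eqref{residual} with the sensitivity criterion. The ES, AURS, and robustness inequalities are all upper bounds, for which $r \le |\zeta^\alpha|$ lets me appeal only to the matching mode; but sensitivity demands a \emph{lower} bound that must hold uniformly across the bank, i.e., every $|\zeta^j|$ must remain large when $\delta \ne 0$. I would resolve this by showing that for each $j$ the attack enters $\phi^j(0,t)$ additively with unit gain while the mismatch perturbations $\theta_{(\cdot)}\hat W^j$ and $\theta_{(\cdot)}\hat V^j$ remain bounded by state norms, so that $\delta$ can be isolated. The second, purely calculational, hurdle is bookkeeping: the sixteen free parameters $\mu_1,\ldots,\mu_{16}$ together with the kernel norms $\lVert k_1^j\rVert, \lVert k_2^j\rVert$ and mode constants $v^\alpha_\star, h^\alpha, \tau, \gamma, p^\alpha_\star$ must be grouped so that the final scalars match precisely the entries $\overline{\Upsilon}_8^{\alpha,j}, \Upsilon_9, \ldots, \Upsilon_{18}$ listed in Table~\ref{tab:thm_var} and the matrix forms \eqref{robLam}--\eqref{senLam}.
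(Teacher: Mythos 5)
Your overall strategy mirrors the paper's (a Lyapunov functional for the backstepped dynamics \eqref{phi_tx}--\eqref{BC_psi_l}, Young/Cauchy--Schwarz absorption of $a^j,b^j,\theta_\phi,\theta_\psi$, quadratic-form regroupings yielding $\Lambda_1^j$ and $\Lambda_2^j$, the matching detector for the upper bounds, and per-detector lower bounds with the mismatch terms isolated for sensitivity). However, there is a genuine gap in the ES/AURS part: your functional $V^j$ contains only interior weighted $L_2$ terms in $\phi^j$ and $\psi^j$, whereas the criteria \eqref{ES_criteria}--\eqref{DSS_criteria} are \emph{pointwise-in-time} bounds on $r^2(t)$, and the output is a boundary trace, $\zeta^j = c^\alpha\phi^j(L,t)+\theta_c\hat W^j(L,t)$. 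Exponential decay of $\lVert\phi^j(\cdot,t)\rVert$ and $\lVert\psi^j(\cdot,t)\rVert$ does not control $\phi^j(L,t)$ at each instant (traces of hyperbolic solutions are not bounded by interior $L_2$ norms), so the step ``$\dot V^\alpha\le-\kappa V^\alpha$ combined with $r\le|\zeta^\alpha|$ delivers \eqref{ES_criteria}'' does not go through as written. The paper closes exactly this hole by augmenting the functional with the boundary energy $E_3^j=\tfrac{e^{-L}\xi}{2}\,{\phi^j}^2(L,t)$ with $\xi=e^L-1$: its time derivative (using \eqref{phi_tx} at $x=L$) cancels the $\phi^j(L,t)\phi^j_x(L,t)$ cross term produced by the transport term in $E_1^j$ (this is why $\Upsilon_{23}^\alpha=0$ and why the theorem's conditions involve $\xi$), and, since $E^j\ge E_3^j$, the Gronwall estimate immediately yields a pointwise bound on ${\zeta^j}^2(t)$ via ${\phi^j}^2(L,t)$. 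Without such a boundary term you also lose the negative $-\Upsilon_8^{\alpha,j}E_3^j$ contribution that feeds the ${\phi^j}^2(L,t)$ entry of the quadratic forms used for \eqref{robust def r} and \eqref{sen def r}, so the entries of Table~\ref{tab:thm_var} would not be reproduced (you could substitute the boundary flux term from integration by parts there, but not for the pointwise ES/AURS bounds).

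A secondary, smaller point: you never relate the initial value of your functional to $r^2(0)$, which is needed to put $\Upsilon_2\,r^2(0)$ on the right-hand side of \eqref{ES_criteria}--\eqref{DSS_criteria}; the paper does this through Assumption~\ref{IC_assm} and the bound $E^j(0)\le\Upsilon_{26}^j{\phi}{^j_0}^2(L)$ together with $\Upsilon_0 r^2(0)\ge{\phi^j}^2(L,0)$, and again the boundary term $E_3^j$ is what makes this comparison natural. Your treatment of the $\min_j$ structure for sensitivity (per-detector lower bound with the attack entering through \eqref{BC_phi_zero} with unit gain and the mismatch terms $\Theta^j$ kept on the right) is conceptually the same as the paper's \eqref{sen def zeta} and is sound once the boundary-term issue above is repaired.
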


\begin{proof}
    The proof is presented in the appendix.
\end{proof}
In the next section, we present our simulation results to evaluate the performance of the proposed detection scheme, designed based on Theorem \ref{DSS_th}.

\section{Simulation Results}\label{sr}
\begin{table}[h!]
    \centering
       \caption{List of traffic simulation parameters for each mode $\alpha$.}
    \begin{tabular}{|c|c|c|c|c|c|}
    \hline
        Mode & $v_f^\alpha $ & $\rho^\alpha $ & $ v^\alpha_\star$ & $ q^\alpha_\star$ & \multirow{2}{*}{$k^\sigma$}\\
        $\alpha $ & $ [m/s]$ & $[veh/m]$ & $ [m/s]$ & $[veh/s]$ & \\
        \hline
       (Rainy) 1 & 25 $\pm 2.5$ &  \multirow{4}{*}{0.12} & 6 & 0.72 & 1.8\\
        \cline{1-2} \cline{4-6}
        (Light) 2 & 30 $\pm 2.5$ &  & 7 & 0.84 & 2.4 \\
        \cline{1-2} \cline{4-6}
       (Heavy) 3 & 35 $\pm 2.5$ &  & 8 & 0.96 & 3.5\\
        \hline
    \end{tabular}
    \label{tab:modeParam}
\end{table}
We present case studies that demonstrate two attack scenarios and the efficacy of our proposed algorithm for detection. We consider a freeway segment with a length of $L =1\, km$ and a maximum density of $\rho_m$ = 160   vehicles/km. We take the relaxation time $\tau = 60 s$ and $\gamma = 1$ and consider sinusoidal spatial initiation for $q$ and $v$ at $t =0$. To imitate real-world traffic operations, we consider 4 sources of uncertainties:
(i) in-domain sinusoidal uncertainties $\eta_q$ and $\eta_v$ respectively in $q$ and $v$ with $\pm 0.12\%$, (ii) uncertain free flow velocity $v_f$ ($\pm 2.5 m/ s$), (iii) varied mainline flux $q_s$ at the inlet with $\pm 10\%$, and (iv) measurement noise of $\pm 2\%$ at the outlet. To illustrate our method, we consider have 3 admissible traffic modes (rainy, light, heavy) for the freeway along with corresponding 3 controller modes and a bank of 3 detectors, i.e., $m = 3$ and $\alpha, \sigma, j \in [1,2,3]$ and Table~\ref{tab:modeParam} lists the traffic parameters used in our simulations for each mode $\alpha$. The threshold is obtained as 0.02 for each traffic mode using \eqref{threshold}.

\subsection{Case study I: DoS attack}
\begin{figure}[h]
    \centering
    \includegraphics[width=0.65\linewidth]{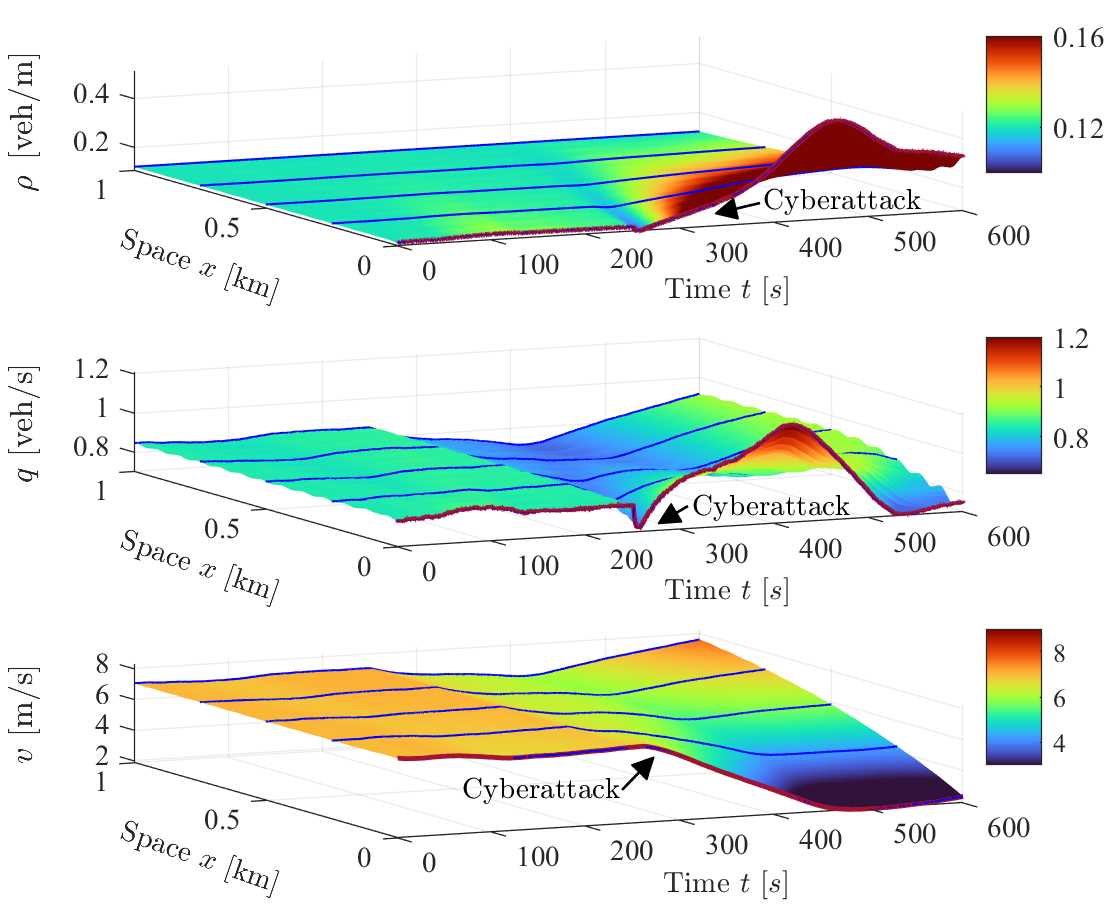}
    \caption{Under DoS attack, the figure shows traffic density (top),  flux (middle), and  velocity (bottom).}
    \label{fig:case1rqv}
\end{figure}
\begin{figure}[h]
    \centering
    \includegraphics[width=0.6\linewidth]{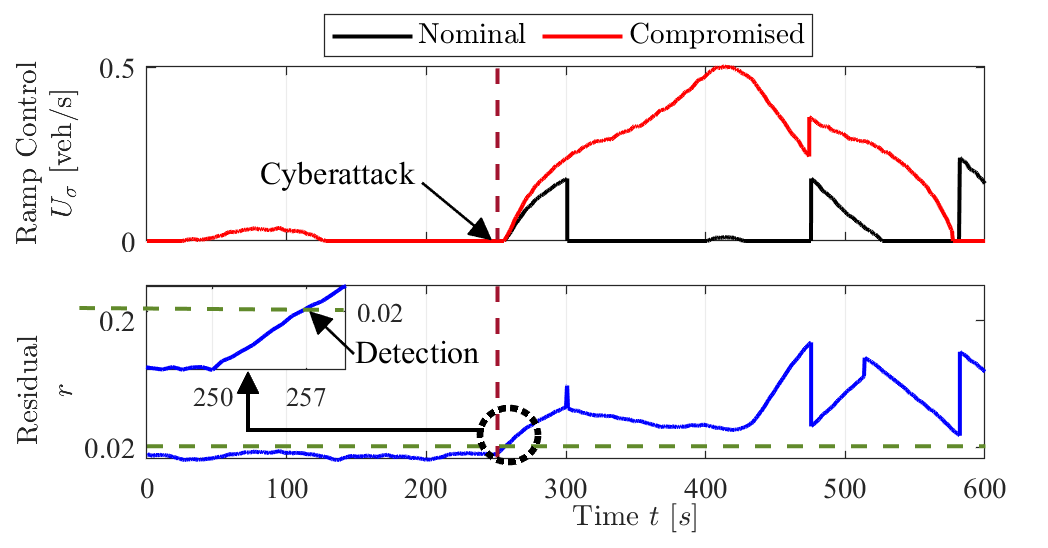}
    \caption{Under DoS attack, the figure shows nominal and corrupted boundary ramp control (top) and attack detection with the generated residual (bottom).}
    \label{fig:case1Res}
\end{figure}
For this case study, we consider the DoS cyberattack scenario described in Section~\ref{impact_sec}. Here, the traffic driving pattern changes from mode 2 to 1 (Table~\ref{tab:modeParam}) at the onset of rain. From the start of the DoS attack at 250s, the local ramp controller is unable to receive supervisory commands, and it continues to run in mode 2 (light traffic) despite the changed driving conditions due to the onset of rain. This scenario has a two-fold effect. First, at the freeway entry, this causes a higher flux caused by the release of more vehicles from the incorrect ramp controller and a lower velocity caused by adverse weather conditions -- leading to \textit{traffic congestion} (or high traffic density $\rho$). The red contour in the top plot of Fig.~\ref{fig:case1rqv} clearly shows the congestion scenario. Similarly, the middle and bottom plots of Fig.~\ref{fig:case1rqv} show the high flux and low velocity at the inlet, respectively. To counteract the high influx and low velocity of the inlet, the traffic at the outlet of the freeway reaches higher flux and higher velocity conditions, which are \textit{unsafe for rainy driving}. Thus, through this DoS attack, the adversary not only sways the traffic dynamics from its desired equilibrium of \(( q^1_\star , v^1_\star )\) but also successfully achieves both undesirable and unsafe driving outcomes.
Specifically, the top plot of Fig~\ref{fig:case1Res} illustrates the higher influx of vehicles from the ramp under the DoS corrupted control $U_{\tilde{\sigma}} = U_2$ compared to the desired lower influx under nominal control $U_{{\sigma}} = U_1$. Now, the bottom plot of Fig~\ref{fig:case1Res} shows that the generated residual $r$ crosses the threshold within 8s of attack injection, resulting in accurate detection of the  DoS attack $\delta$.

\subsection{Case study II: FDI attack}
\begin{figure}[h]
    \centering
    \includegraphics[width=0.65\linewidth]{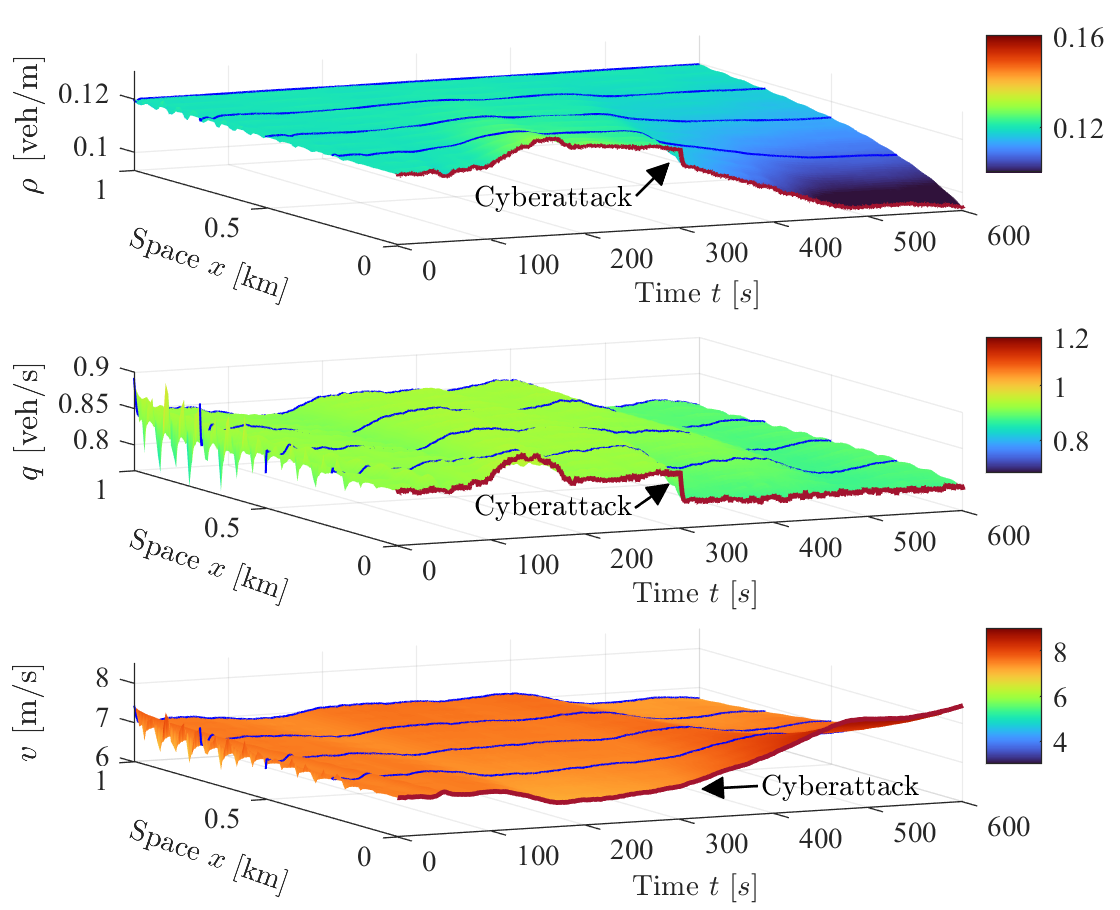}
    \caption{Under FDI attack, the figure shows traffic density (top),  flux (middle), and  velocity (bottom).}
    \label{fig:case2rqv}
\end{figure}

\begin{figure}[h]
    \centering
    \includegraphics[width=0.6\linewidth]{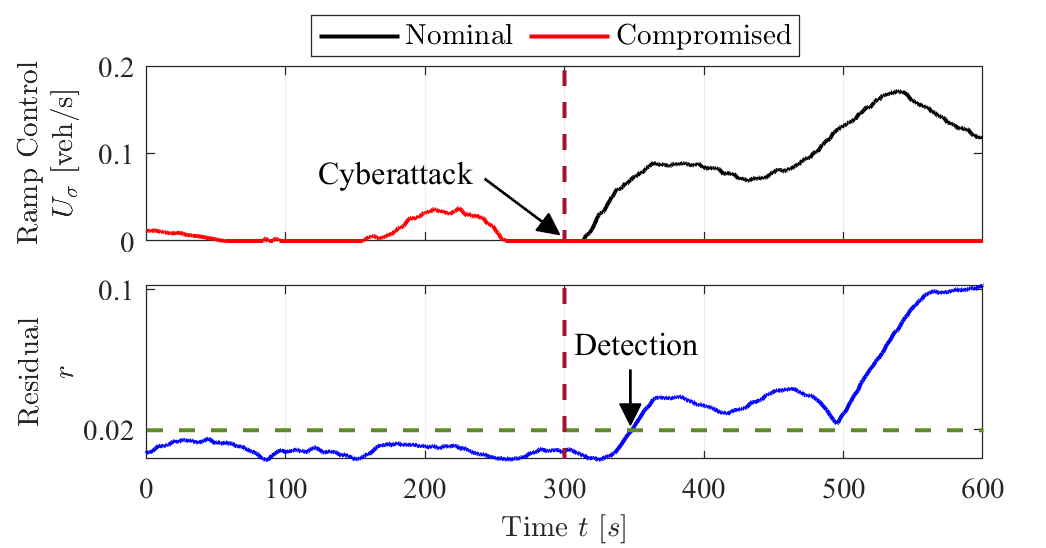}
    \caption{Under FDI attack, the figure shows nominal and corrupted boundary ramp control (top) and attack detection with the generated residual (bottom).}
    \label{fig:case2Res}
\end{figure}
In this scenario, we consider that the freeway traffic is running with mode 3 (Table~\ref{tab:modeParam}), i.e.,
heavy traffic with high flux and velocity in response to a high traffic demand, and thus the ramp meter is controlled to release more vehicles to the freeway. We consider that an adversary injects FDI attack at 300s such that the local controller switches to mode 2 (light traffic setting) and fewer vehicles are released from the ramp. Hence, the traffic flux decreases with the lower influx from the ramp, as shown in the middle plot of Fig~\ref{fig:case2rqv}. However, the velocity stays higher under the high traffic demand, leading to a lower traffic density $\rho$ closer to the inlet. The top and bottom plots of Fig~\ref{fig:case2rqv} illustrate these phenomena. Thus, the adversary forces the traffic not to settle to \(( q^3_\star , v^3_\star )\) and \textit{reduces the traffic throughput }for the freeway segment.
The top plot of Fig~\ref{fig:case2Res} shows the required (high) influx of vehicles from the ramp under nominal control scenarios $U_{{\sigma}} = U_3$. In contrast, after FDI attack, the compromised control input  $U_{\tilde{\sigma}} = U_2$ causes the ramp to stop releasing vehicles altogether. Under this FDI attack, the residual $r$ crosses the threshold within 50s of attack injection and the proposed scheme reliably detects the attack $\delta$, as shown in the last plot of Fig~\ref{fig:case2Res}.

\section{Conclusion} \label{con}
In this work, we adopt a model-based residual generation approach to detect cyberattacks on controller switching commands in ramp metering for an uncertain, multimodal macroscopic traffic operation governed by ARZ model. 
We propose a bank of detectors corresponding to each admissible traffic mode to compensate for the unknown traffic operational mode 
of freeway road networks under changing driving patterns. 
In addition, we utilize backstepping methodology along with Lyapunov functions to derive analytical design conditions for each of these detectors to guarantee stability, robustness to uncertainty, and sensitivity to switching attacks. We evaluate the performance of the proposed scheme against two prevalent cyberattacks: DoS and FDI, while considering realistic traffic parameters and uncertainties. The simulation results exhibit strong potential for real-time detection of cyberattacks in multi-mode traffic equipped with a hierarchical controller infrastructure.

\section*{Appendix}
\vspace{-2mm}
\textbf{Proof of Theorem \ref{DSS_th}:} \vspace{-2mm}
\subsection*{Lyapunov function and its derivatives} Let us consider a family of Lyapunov candidate functions for the detector bank such that the Lyapunov function for the $j^{th}$ detector is given as $E^j = E_1^j + E_2^j + E_3^j$,
where 
\begin{align}
     & E_1^j = \!\! \int_0^L \!\!\!\!\! e^{-x}\frac{{\phi^j}^2(x,t)}{2}dx,\,\, E_2^j = \!\! \int_0^L \!\!\!\!\! e^{x}\frac{{\psi^j}^2(x,t)}{2}dx, \label{lyap1}\\
    &  E_3^j = \frac{e^{-L} \xi}{2}{\phi^j}^2(L,t), \, \text{where}\,\,\xi = e^L-1>0. \label{lyap23}
\end{align} 
We first differentiate $E_1^j$ with respect to time:
\begin{align}
    \dot{E}^j_1  = \int_0^L e^{-x} \phi^j(x,t) \phi^j_t (x,t) dx. \label{e1Dot}
\end{align} We substitute $\phi^j_t (x,t)$ in \eqref{e1Dot} using \eqref{phi_tx} such that 
\begin{align}
    \dot{E}^j_1  =  I_1 + I_2 + I_3, \label{e1d_i123}
\end{align}  with 
 $\mathbb{f}^j \coloneqq   e^{-x} \phi^j(x,t)$, $I_1 =-  v^\alpha_\star \!\! \int_0^L \! \mathbb{f}^j \phi^j_x (x,t) dx$, \\$I_2  = \! \int_0^L \! \mathbb{f}^j a^j (x,t) dx$, and $I_3 = \! \int_0^L \! \mathbb{f}^j \theta_\phi^j (x,t) dx$. \\
Next, we use integration by parts twice on $I_1$ to obtain
\begin{align}
    I_1 &=   v^\alpha_\star \left[ 1 - e^{-L} \right] \phi^j(L,t)\phi^j_x (L,t) - \frac{ v^\alpha_\star e^{-L}}{2} {\phi^j}^2(L,t) \nonumber \\
    &\quad + \frac{ v^\alpha_\star}{2} {\phi^j}^2 (0,t) -   v^\alpha_\star E_1^j.
\end{align}
Using the boundary condition \eqref{BC_phi_zero} and Young's inequality, we can rewrite $I_1$ as
\begin{align}
    I_1 &=   v^\alpha_\star \left[ 1 - e^{-L} \right] \phi^j(L,t)\phi^j_x (L,t) - \frac{ v^\alpha_\star e^{-L}}{2} {\phi^j}^2(L,t) \nonumber \\
    &\quad + {\mu}_{9}  v^\alpha_\star\, h{^\alpha}^2\,{\psi^j}^{2} (0,t) + {\mu}_{10}  v^\alpha_\star\, \theta_h^2\,\hat{V}{^j}^2 (0,t) \nonumber\\
    &\quad + {\mu}_{11}  v^\alpha_\star\, \theta_q^2  + {\mu}_{12}  v^\alpha_\star\, {\delta}^2 -   v^\alpha_\star E_1^j.
\end{align}
Next, we simplify integral $I_2$ in four subsequent steps. 

\noindent
\textbf{Step I:} First, we apply Cauchy-Schwarz inequality and  Young's inequality on $I_2$ for  $\mu_{17} \in \mathbb{R}^+$, to obtain
\begin{align}
   & I_2 \leqslant \frac{1}{2 \mu_{17}}  E_1^j+ \frac{\mu_{17}}{2} \int_0^L {a^j}^2(x,t) dx. \label{i2}
\end{align}

\noindent
\textbf{Step II: } Using \eqref{a2dq} and Young's inequality, we can write
 \begin{align}
 \int_0^L\!\!\!\!\!\!  {a^j}^2(x,t) dx  & \leqslant  {\mu}_{19} \!\! \int_0^L \!\!\!\!\!\! {\eta_q}^2(x,t) dx + {\mu}_{19} \!\! \int_0^L \!\!\!\!\!\! \left[\mathcal{I}_{M^j} ( \eta_q)\right]^2 dx. \nonumber
\end{align}
Here, ${\mu}_{19} = \max(1+\mu_{18}, 1+ \mu_{18}^{-1})$.

\noindent
\textbf{Step III:}
Then, applying the Cauchy-Schwarz inequality on the squared inequality, we obtain
\begin{align}
    \int_0^L \!\!\!\!\!\! \left[\mathcal{I}_{M^j} ( \eta_q)\right]^2 dx \leqslant  \int_0^L \!\!\!\!\mathcal{I}_{{M^j}^2}\left(\eta_q^2 \right)  (L-x) dx.
\end{align}

\noindent
\textbf{Step IV:}
Moreover, from the kernel definitions \eqref{R_def} and \eqref{MN_def}, we can write { ${M^j}^2(x,z) \leqslant \max\limits_{x\leqslant z\leqslant L} {M^j}^2(x,z) \leqslant \left(\frac{h^\alpha}{\tau \gamma  p^\alpha_\star}\right)^2$}. Hence, for $2 \mu_7 = \mu_{17} \mu_{19} $ and $ \overline{M}^j = 1+ \frac{(L\,{h^\alpha})^2}{(\tau \gamma  p^\alpha_\star)^2} $, \eqref{i2} simplifies to
\begin{align}
    I_2 \leqslant \frac{1}{2\mu_{17}}  E_1^j+  \mu_7\, \overline{M}^j \lVert \eta_q \rVert^2.
\end{align}
\noindent
Furthermore, using \eqref{thetaPhiker} and following similar steps I to IV for $I_3$, we obtain
\begin{align}
    I_3 \leqslant \frac{1}{2\mu_{20}}  E_1^j+  {\mu}_{8} \overline{M}^j \left[ \theta_v^2 \lVert \hat{W}_x^j \rVert^2+ \theta_c^2\lVert k_1^j  \rVert^2\hat{W}{^j}^2(L,t) \right]. \nonumber
\end{align}
Here,  $\mu_{25}, \mu_{10} \in \mathbb{R}^+$. Now, substituting $I_1$, $I_2$, and $I_3$ in \eqref{e1d_i123}, we obtain:
\begin{align}
    \dot{E}^j_1 & \leqslant  -  \left[ v^\alpha_\star - \frac{1}{2\mu_{17}} - \frac{1}{2\mu_{20}}\right] E_1^j  - \frac{ v^\alpha_\star} {\xi} E_3^j + {\mu}_{12}  v^\alpha_\star\, {\delta}^2  \nonumber \\
    & + {\mu}_{9}  v^\alpha_\star\, h{^\alpha}^2\,{\psi^j}^2 (0,t) \!+ {\mu}_{10}  v^\alpha_\star\, \theta_h^2\,\hat{V}{^j}^2 (0,t) \!+ {\mu}_{11}  v^\alpha_\star\, \theta_q^2   \nonumber \\
    & +  v^\alpha_\star \left[ 1 - e^{-L} \right] \phi^j(L,t)\phi^j_x (L,t) +  \mu_7\, \overline{M}^j \lVert \eta_q \rVert^2 \nonumber \\
    &  + {\mu}_{8} \overline{M}^j \left[ \theta_v^2 \lVert \hat{W}_x^j \rVert^2+ \theta_c^2\lVert k_1^j  \rVert^2\hat{W}{^j}^2(L,t) \right]. \label{e1Dotf}
\end{align}
Next, we differentiate $E_2^j$ \eqref{lyap23} with respect to time and use \eqref{psi_tx} to obtain 
\begin{align}
    \dot{E}^j_2  =  I_4 + I_5 + I_6, \,\,\, \text{where} \label{e2_i456}
\end{align} 
\begin{align}
     I_4 =  v^\alpha_\star h^\alpha \!\!\! \int_0^L \!\!\!\! \mathbb{g}\psi^j_x dx, \,\,\, I_5  = \!\!\! \int_0^L \!\!\!\! \mathbb{g} b^j dx, \,\,\, I_6 = \!\!\!\int_0^L \!\!\!\! \mathbb{g} \theta_\psi^j dx, 
\end{align}
and  $\mathbb{g} = e^{x} \psi^j(x,t)$. Next, using integration by parts and the boundary condition \eqref{BC_psi_l}, $I_4$ becomes
\begin{align}
    I_4 & \leqslant \frac{\Upsilon_{19} }{2}   \phi{^j}^2 (L,t) +  \Upsilon_{19} \theta_k^2\,\hat{W}{^j}^2 (L,t) \nonumber\\
    & \qquad - \frac{ v^\alpha_\star  h^\alpha}{2} {\psi^j}^2 (0,t)-   v^\alpha_\star h^\alpha E_2^j.
\end{align}
Here, $\Upsilon_{19} = {\mu}_3  v^\alpha_\star  h^\alpha e^{L} {\lambda^j}^2$. We apply steps I to IV for $I_5$ and $I_6$ and substitute  $I_4$, $I_5$, and $I_6$ in \eqref{e2_i456} to get:
\begin{align}
    \dot{E}_2^j & \leqslant - \left[    v^\alpha_\star h^\alpha - \frac{1}{2 \mu_{21}} - \frac{1}{2 \mu_{22}} \right] E_2^j  + \frac{ \Upsilon_{19} e^L}{\xi}   E_3^j \nonumber   \\
    & - \frac{ v^\alpha_\star  h^\alpha}{2} {\psi^j}^2 (0,t) + {\mu}_5 \lVert \eta_v \rVert^2  + {{\mu}_5} \overline{N}^j \lVert \eta_q \rVert^2 \nonumber   \\
    &+ {\mu}_{13} \theta_{h,v}^2 \lVert \hat{V}^j_x  \rVert^2 + {\mu}_{13} \theta_{\beta}^2 \lVert \hat{W}^j  \rVert^2 + {\mu}_{13} \overline{N}^j \theta_v^2  \lVert \hat{W}^j_x  \rVert^2 \nonumber \\
    & +   \!  \left[{\mu}_{13} \theta_c^2 [ \lVert k_2^j  \rVert^2 \!+ \! \overline{N}^j  \lVert k_1^j  \rVert^2 ] \! + \!\!\Upsilon_{19}  \theta_k^2 \right]\!\! \hat{W}{^j}^2\!(L,t). \label{e2Dotf}
\end{align}
Here, 
$\overline{N}^j  = \frac{L^2}{\left(\tau \gamma  p^\alpha_\star\right)^2}$ is obtained from the kernel definitions \eqref{S_def} and \eqref{MN_def} in Step IV and for ${N^j}^2(x,z) \leqslant \max\limits_{x\leqslant z\leqslant L} {N^j}^2(x,z) \leqslant \frac{1}{\left(\tau \gamma  p^\alpha_\star\right)^2}$.
Now,  we differentiate $E_3^j$ \eqref{lyap23} with respect to time and use \eqref{phi_tx} to obtain 
\begin{align}
    \dot{E}_3^j &  \leqslant  - v^\alpha_\star e^{-L} \xi \phi^j (L,t) \phi^j_x (L,t) + \frac{{\mu}_2 e^{-L}}{\xi} E_3^j \nonumber \\
    & \quad +\frac{ \mu_{6} \xi{^j}^2}{2} a{^j}^2 (L,t) +\frac{ \mu_{23} \xi{^j}^2}{2} \theta_\phi^2 (L,t). 
\end{align}
Next, using \eqref{a2dq}-\eqref{thetaPhiker},  applying Young's inequality, and for $2 \mu_{14} = \mu_{23}\max(1+\mu_{24}, 1+ \mu_{24}^{-1}k{_1^j}^2(L))$, $\dot{E}_3^j$ yields
\begin{align}
    &  \dot{E}_3^j   \leqslant  - v^\alpha_\star e^{-L} \xi \phi^j (L,t) \phi^j_x (L,t)  +\frac{ \mu_{6} \xi{^j}^2}{2} \eta_q^2 (L,t)\nonumber \\
    & + \frac{{\mu}_2 e^{-L}}{\xi} E_3^j \! + {\mu}_{14}\! \left[\theta_v^2 \hat{W}{^j_x}^2 (L,t)
     +  \theta_c^2  \hat{W}{^j}^2 (L,t) \right]. \label{e3Dotf}
\end{align}
Now, using \eqref{e1Dotf}, \eqref{e2Dotf}, and \eqref{e3Dotf}, we obtain $\dot{E}^j$ as:
\begin{align}
    \dot{E}^j & \leqslant -  \Upsilon_{20}^\alpha E_1^j   - \Upsilon_{21}^\alpha E_2^j  - \Upsilon_{8}^{\alpha,j}  E_3^j + {\mu}_{12}  v^\alpha_\star\, {\delta}^2 \nonumber   \\ 
    &  + \Upsilon_{13}^j \lVert \eta_q \rVert^2  + {\mu}_5 \lVert \eta_v \rVert^2 +\frac{ \mu_{6} \xi{^j}^2}{2} \eta_q^2 (L,t)  \nonumber \\
    & - \Upsilon_{22}^\alpha {\psi^j}^2 (0,t)+ \Upsilon_{23}^\alpha \phi^j(L,t)\phi^j_x (L,t)  +\Upsilon_\theta \Theta^j. \label{eDot_short}
\end{align}
The parameters \eqref{eDot_short}  are defined in Table~\ref{tab:thm_var} and Table~\ref{tab:eDot_var}.
\renewcommand{\arraystretch}{1.4}
\begin{table}[h!]
    \centering
    \caption{List of parameters in \eqref{eDot_short}.}
    \begin{tabular}{|l| l|}
    \hline
        $\Upsilon_{20}^\alpha  =  v^\alpha_\star - \frac{1}{2\mu_{18}} - \frac{1}{2\mu_{20}}$ & $\Upsilon_{21}^\alpha =  v^\alpha_\star h^\alpha - \frac{1}{2 \mu_{21}} - \frac{1}{2\mu_{22}}$ \\
        \hline
        $\Upsilon_{22}^\alpha = \frac{ v^\alpha_\star  h^\alpha}{2} \left[ 1- 2{\mu}_{9}  h{^\alpha}\right]$  &  $\Upsilon_{23}^\alpha =  v^\alpha_\star \left[ 1 - e^{-L} - e^{-L} \xi \right]$ \\
        \hline
          \multicolumn{2}{|l|}{$\Theta^j(t) \coloneqq$  $\max \{ \theta_h^2\,\hat{V}{^j}^2 (0,t), \theta_{h,v}^2 \lVert \hat{V}^j_x  \rVert^2, \theta_{\beta}^2 \lVert \hat{W}^j  \rVert^2,$} \\
         \multicolumn{2}{|c|}{$\quad \theta_v^2 \lVert \hat{W}_x^j \rVert^2,\theta_q^2, \theta_v^2 \hat{W}{^j_x}^2 (L,t),\theta_c^2  \hat{W}{^j}^2 (L,t)\}\geqslant 0 $ }\\
        \hline
    \end{tabular}
        \label{tab:eDot_var}
\end{table}\\
\noindent
Since $\xi = e^L -1$ by choice, $\Upsilon_{23}^\alpha = 0, \forall \alpha$. Furthermore,  for $2{\mu}_{9}  h{^\alpha} \leqslant 1$, we can obtain
\begin{align}
    \dot{E}^j & \leqslant -  \Upsilon_{20}^\alpha E_1^j   -\! \Upsilon_{21}^\alpha E_2^j  -\! \Upsilon_{8}^{\alpha,j}  E_3^j + {\mu}_{12}  v^\alpha_\star\, {\delta}^2 \! +\! \Upsilon_{13}^j \lVert \eta_q \rVert^2 \nonumber   \\ 
    &  \quad + {\mu}_5 \lVert \eta_v \rVert^2 +\frac{ \mu_{6} \xi{^j}^2}{2} \eta_q^2 (L,t)  +\Upsilon_\theta \Theta^j. \label{eDotFinal}
\end{align}
Next,  utilizing $E^j$ \eqref{lyap1}-\eqref{lyap23} and the derivatives \eqref{eDotFinal}, we will prove the theorem in four parts to derive the conditions for AURS, ES, robustness, and sensitivity.

\subsection*{Condition for AURS}
During compromised traffic operation  in the presence of uncertainties, i.e., $\delta,\eta_q,\eta_v \neq 0$,  $E^j$ from \eqref{eDotFinal} becomes:
\begin{align}
    \dot{E}^j \! & \leqslant \!\! -  \Upsilon_{24}^j E^j  \! +\! \Upsilon_{25}^j \!\left[ {\delta}^2 \!\! +\! \overline{\eta}^2\right ]\! +\! \Upsilon_\theta \Theta^j . \label{eDotDSS}
\end{align}
Here, $\Upsilon_{25}^j = \max({\mu}_{12}  v^\alpha_\star, \Upsilon_{13}^j,{\mu}_5,\frac{ \mu_{6} \xi{^j}^2}{2})$ and $\Upsilon_{24}^j = \min(\Upsilon_{20}^\alpha, \Upsilon_{21}^\alpha, \Upsilon_{8}^{\alpha,j}), \forall \alpha$. We define 
$\overline{\eta}^2 = \! \lVert \eta_q \rVert^2\!\! +\!\!\lVert \eta_v \rVert^2\! \!+ \!\eta_q^2 (L)$. Furthermore, using the ${\lambda}^j$ definition from \eqref{lambda}, we can deduce that the condition on $k_3^j$ as \eqref{dssThc} ensures $\Upsilon_{8}^{\alpha,j} \geqslant 0$ for ${\mu}_1 \in \mathbb{R}^+$. Thus, \eqref{dssThc} ensures the convergence of the Lyapunov functions $E^j$ in \eqref{eDotDSS}. Next, using Gronwall's inequality we obtain 
\begin{align}
    E^j (t) & \leqslant E^j (0) e^{-\Upsilon_{24}^j\,t} \!+\!\frac{\Upsilon_{25}^j}{\Upsilon_{24}^j} \sup\limits_{t\geqslant0}  \!\left[ {\delta}^2 \! +\! \overline{\eta}^2\right ]\!+ \!\frac{\Upsilon_\theta}{\Upsilon_{24}^j} \sup\limits_{t\geqslant0} \Theta^j\!. \nonumber
\end{align}
Furthermore, adding $\frac{e^{-L}\xi}{2} \theta_c^2 \hat{W}{^j}^2(L,t)$ on both side, we ca rewrite the above inequality as 
\begin{align}
    E^j (t) + \frac{e^{-L}\xi}{2} \theta_c^2 \hat{W}{^j}^2(L,t)& \leqslant E^j (0) e^{-\Upsilon_{24}^j\,t} + \overline{\Upsilon}_\theta \sup\limits_{t\geqslant0} \Theta^j \nonumber \\
    & \,\quad  +\frac{\Upsilon_{25}^j}{\Upsilon_{24}^j} \sup\limits_{t\geqslant0}  \!\left[ {\delta}^2 \! +\! \overline{\eta}^2\right ]. \label{etE0hatW}
\end{align}
Here $\overline{\Upsilon}_\theta = {\max\left(\Upsilon_\theta, \,\, \Upsilon_{15}+\frac{e^{-L}\xi}{2} \right)}\Upsilon{_{20}^j}^{-1}$. Now, from Assumption \ref{IC_assm}, we have $E^j(0) \! \leqslant\!\! \Upsilon_{26}^j {\phi}{^j_{0}}^2(L)$ for $\Upsilon_{26}^j= (\frac{\overline{\phi}{^j_{0}}^2(x) + \overline{\psi}{^j_{0}}^2 (x)e^L}{\underline{\phi}{^j_{0}}^2(L)} + \frac{\xi}{2e^{L}} )$. Moreover, from Assumption~\ref{IC_assm}, we can assume that there exits a constant $\Upsilon_0 \in \mathbb{R}^+$, such that $\Upsilon_0 r^2 (0) \geqslant \overline{\Upsilon}_0  {\zeta^j}^2 (0) \geqslant {\phi^j}^2(L,0) $. Next, from \eqref{Err_phi_psi}, we can write $ \frac{e^{-L}\xi}{2 {\mu}_{32}}  {\zeta^j}^2 (t)  \leqslant    E^j (t) +  \frac{e^{-L}\xi}{2} \theta_c^2 \hat{W}{^j}^2(L,t)$, where ${\mu}_{32} > 0$ from Young's inequality. This implies
\begin{align}
    \zeta{^j}^2 (t) & \leqslant \overline{\Upsilon}_2 \zeta{^j}^2 (0)  e^{-\Upsilon_3\,t} + {\Upsilon_4} \sup\limits_{t\geqslant0}  \!\left[ {\delta}^2 \! +\! \overline{\eta}^2\right ]\nonumber  \\
    & \qquad \qquad +\overline{\Upsilon}_\theta \sup\limits_{t\geqslant0} \Theta^j. \label{RjtR0} 
\end{align}
where $\overline{\Upsilon}_2 = \max\limits_j \left(\Upsilon_{26}^j\frac{2 {\mu}_{32} \Upsilon_0 e^L}{\xi} \right) $, $\Upsilon_3 = \min\limits_j \Upsilon_{24}^j$, and $\Upsilon_4 = \max\limits_j\frac{\Upsilon_{25}^j}{\Upsilon_{24}^j} $. 
Now, from the definition of $\Theta^j$ (Table~\ref{tab:eDot_var}),
\begin{align}
    \min\limits_j\Theta^j  = 0, \label{theta 0}
\end{align}
 since one of the detector parameters must always match the traffic dynamics parameters.
  Thus, taking minimum on both side of \eqref{RjtR0} and for $\Upsilon_2 = \overline{\Upsilon}_2 \Upsilon_0$, we obtain \eqref{DSS_criteria}.
Thus, we prove that \eqref{dssThc} provides a sufficient condition to satisfy \eqref{DSS_criteria} and ensure stability of the detection scheme under all circumstances. 

\subsection*{Condition for ES} During nominal traffic operation without uncertainty, i.e., $\delta, \eta_q, \eta_v \neq 0$, we obtain $\Upsilon_1 \geqslant \Upsilon_3$ in the absence of the tuning parameters from $\delta, \eta_q, \eta_v$ in $\Upsilon_{20}^\alpha, \Upsilon_{21}^\alpha, \Upsilon_{8}^{\alpha,j}$ (Table~\ref{tab:thm_var},~\ref{tab:eDot_var}). The proof then follows directly from the proof of AURS and \eqref{dssThc} provides a sufficient condition for the exponential convergence of the residual $r(t)$ \eqref{ES_criteria}. 

\subsection*{Condition for robustness}
Let us first prove the robustness criteria for each detector such that for the $j^{th}$ detector we have the criteria as \begin{align}
    \! \int_0^\infty \!\!\!\!\!\! \zeta{^j}^2 dt \leqslant \! \Upsilon_{5} \!\! \int_0^\infty \!\!\!\!\!\!  \left[ \overline{\eta}^2 \!+ \!\Theta^j \right ] dt +\! \epsilon. \label{robust def zeta}
\end{align}
Now, from the definition  of $\zeta^j$ \eqref{Err_phi_psi} and $\Theta^j$ (Table \ref{tab:eDot_var}), we have $\zeta{^j}^2 \leqslant \mu_4 \phi{^j}^2 (L,t)+ \mu_4 \theta_c^2 \hat{W}{^j}^2(L,t) \leqslant  \mu_4 \phi{^j}^2 (L,t) + \mu_4 \Theta^j$. This inequality implies that if we prove 
\begin{align}
    \int_0^\infty \!\!\!\!\!\! \mu_4 \phi{^j}^2 (L,t) dt & \leqslant \!\! \int_0^\infty \!\!\!\!\!\!  [ \Upsilon_{5} \overline{\eta}^2 +   [\Upsilon_{5} -  \mu_4 \Theta^j] ] dt \!+\! \epsilon, \label{robust def phi}
\end{align}
it in turn proves \eqref{robust def zeta}. Next, let us define a vector $\Gamma_1 = \begin{bmatrix}
    \phi^j (L,t) & \lVert \eta_v \rVert & \lVert \eta_q \rVert & \eta_q (L,t) & \Theta
\end{bmatrix}$. Then,  \eqref{robust def phi} becomes
\begin{align}
    \int_0^\infty  \Gamma_1 \mathcal{D}_1 \Gamma_1^T dt - \epsilon \leqslant 0, \label{lambda rob}
\end{align} using $\Gamma_1$, where $ \mathcal{D}_1 = \text{diag} \{\mu_4,  - \Upsilon_{5}, $ $- \Upsilon_{5}, -\Upsilon_{5},  \mu_4 - \Upsilon_{5}\}$ is a $5\times5$ matrix. Furthermore, during nominal traffic operation and in the presence of uncertainties ($\delta =0, \eta_q, \eta_v \neq 0$), the derivative of Lyapunov functions $E^j$  from \eqref{eDotFinal} becomes $ \dot{E}^j \leqslant   \Gamma_1 \mathcal{B}_1 \Gamma_1^T$, where $\mathcal{B}_1 = \Lambda_1^j -\mathcal{D}_1$ and $\Lambda_1^j$ is defined in \eqref{robLam}. Then, integrating this inequality, we obtain
\begin{align}
    E^j (\infty) - E^j (0) - \int_0^\infty  \Gamma_1 \mathcal{B}_1 \Gamma_1^T dt \leqslant 0. \label{einfRob}
\end{align}
Since the left side of \eqref{einfRob} is non-positive, so if we prove
\begin{align}
    \int_0^\infty \! \!\!\!\!\! \Gamma_1 \mathcal{D}_1 \Gamma_1^T dt - \epsilon \leqslant E^j (\infty)\! - \!E^j (0)\! - \!\!\!\int_0^\infty \! \!\!\!\!\! \Gamma_1 \mathcal{B}_1 \Gamma_1^T dt, \label{d1b1E}
\end{align}
it will be sufficient to prove \eqref{lambda rob}. Next, we rearrange \eqref{d1b1E}
\begin{align}
    \int_0^\infty  \!\!\!\Gamma_1 \left[ \mathcal{D}_1 +\mathcal{B}_1  \right] \,\Gamma_1^T dt \leqslant E^j (\infty) - E^j (0) + \epsilon. \label{d1minusb1}
\end{align}
Note that $E^j (\infty) \geqslant 0$ by definition. Moreover, from Assumption \ref{IC_assm} we can deduce that $E^j(0) - \epsilon \leqslant 0$ with our choice of $ \epsilon \geqslant \max\limits_j \Upsilon_{20}^j$.  Hence, we get
\begin{align}
    E^j (\infty) - E^j (0) + \epsilon \geqslant 0. \label{eInfe0}
\end{align} 
Therefore, if we can guarantee $\mathcal{D}_1 +\mathcal{B}_1 = \Lambda_1^j \leqslant 0$, then it will in turn prove \eqref{robust def zeta} using \eqref{d1minusb1}, \eqref{d1b1E}, and \eqref{robust def phi}.  Furthermore, taking the minimum on both sides of \eqref{robust def zeta} and using \eqref{theta 0}, we obtain our robustness criteria \eqref{robust def r}. This implies \eqref{rbostThc} is a sufficient condition to satisfy \eqref{robust def r}.

\subsection*{Condition for sensitivity}\vspace{-2mm}
Next, we first prove the sensitivity criteria for each detector and for  the $j^{th}$ detector we have the criteria as 
\begin{align}
    \int_0^\infty \zeta{^j}^2 dt \geqslant \Upsilon_{6} \int_0^\infty \left[ \delta^2 + \Theta^j \right] dt - \epsilon. \label{sen def zeta}
\end{align}
Next, we add $ \int_0^\infty\theta_c^2 \hat{W}{^j}^2(L,t) dt$ on both sides of \eqref{sen def zeta}
\begin{align}
    &\int_0^\infty  \!\!\!\!\! [\zeta{^j}^2 +\theta_c^2 \hat{W}{^j}^2(L,t) ] dt \geqslant  \nonumber \\
    & \quad \qquad \int_0^\infty \!\!\!\!\! [ \Upsilon_{6} \delta^2 + \Upsilon_{6} {\Theta}^2 +\theta_c^2 \hat{W}{^j}^2(L,t)] dt - \epsilon. \label{sen def zetaW}
\end{align} 
Next, from the definition  of $\zeta^j$ \eqref{Err_phi_psi} we have ${\mu}_{20}  \zeta{^j}^2 + {\mu}_{20}  \theta_c^2 \hat{W}{^j}^2(L,t) \geqslant c{^\alpha}^2 \phi{^j}^2 (L,t)$. Furthermore, since $ {\Theta}^2 >\theta_c^2 \hat{W}{^j}^2(L,t)$, 
to prove \eqref{sen def zeta} using \eqref{sen def zetaW}, it is sufficient to prove
\begin{align}
    \int_0^\infty \!\!\!\frac{c{^\alpha}^2}{{\mu}_{20}}  \phi{^j}^2 (L,t)dt \geqslant  \int_0^\infty \!\!\!\!\!\! [ \Upsilon_{6} \delta^2 \!+ \!({\Upsilon}_{6}\!+\!1) {\Theta}^2 ] dt - \epsilon. \label{sen def phiSQ}
\end{align}
Next,  we add $\int_0^\infty [ {\mu_{15}}  \delta^2 + {\mu_{16}} {\Theta}^2 ] dt$ on both sides of \eqref{sen def phiSQ}, where $\mu_{15},\mu_{16}\in \mathbb{R}^+$ is such that $\mu_{15}^{-1}+ \mu_{16}^{-1} = \frac{ c{^\alpha}^2}{{\mu}_{20}} -\Upsilon_{16} $. Then, using Young's inequality on the left side of the inequality, we have $ \frac{c{^\alpha}^2}{{\mu}_{20}}  \phi{^j}^2 (L,t) + {\mu_{15}}  \delta^2 + {\mu_{16}}  {\Theta}^2 \geqslant \Upsilon_{16} \phi{^j}^2 (L,t) + 2\phi{^j} (L,t) \delta + 2 \phi{^j} (L,t) {\Theta}$.
Therefore, to prove \eqref{sen def phiSQ}, we can prove 
\begin{align}
   & \int_0^\infty \!\! \left[ \Upsilon_{16} \phi{^j}^2 (L,t) +  2\phi{^j} (L,t) \delta + 2\phi{^j} (L,t) {\Theta} \right] dt \geqslant \nonumber \\
    &  \quad  \int_0^\infty \!\!\!\!\! \left[ (\Upsilon_{6} +{ \mu_{15}}) \delta^2 + (\Upsilon_{6} +{\mu_{16}} +1) {\Theta}^2 \right] dt - \epsilon. \label{sen def phi coupled}
\end{align}
Now, let us define a vector $\Gamma_2 = \begin{bmatrix}
    \phi^j (L,t) & \delta  & {\Theta}
\end{bmatrix}$. Then, we rewrite \eqref{sen def phi coupled} using $\Gamma_2$ as 
\begin{align}
    \int_0^\infty  \Gamma_2 \mathcal{D}_2 \Gamma_2^T dt + \epsilon \geqslant 0, \label{lambda sen def}
\end{align} where $ \mathcal{D}_2 = \begin{bmatrix}
    \Upsilon_{16}  & 1 & 1 \\
    1 & \quad -\Upsilon_{6} - { \mu_{15}} & 0 \\
    1 & 0 & \,\, -\Upsilon_{6}-{\mu_{16}} -1
\end{bmatrix}$.
On the other hand, under cyberattack without uncertainty ($\delta \neq0, \eta_q,\eta_v =0$) and since $\overline{\Upsilon}_8^{\alpha,j}\geqslant -\overline{\Upsilon}_8^{\alpha,j}$, we have the derivative of Lyapunov functions $E^j$ from \eqref{eDotFinal} as $ \dot{E}^j \leqslant   \Gamma_2 \mathcal{B}_2 \Gamma_2^T$, where $\mathcal{B}_2 = \Lambda_2^j  + \mathcal{D}_2$ and $\Lambda_2^j$ is defined in \eqref{senLam}. Then, integrating this inequality, we obtain
\begin{align}
     \int_0^\infty  \Gamma_2 \mathcal{B}_2 \Gamma_2^T dt - E^j (\infty) + E^j (0) \geqslant 0. \label{einfSen}
\end{align}
\eqref{einfSen} implies that to prove \eqref{lambda sen def}, it is sufficient to prove
     $\int_0^\infty   \Gamma_2 \mathcal{D}_2 \Gamma_2^T dt + \epsilon \geqslant \int_0^\infty   \Gamma_2 \mathcal{B}_2 \Gamma_2^T dt - E^j (\infty) + E^j (0).$ 
We next we rearrange this inequality to obtain 
\begin{align}
    E^j (\infty) - E^j (0) + \epsilon \geqslant \int_0^\infty  \! \Gamma_2 \left[ \mathcal{B}_2 - \mathcal{D}_2  \right] \,\Gamma_2^T dt. \label{senBD}
\end{align}
    Moreover,  \eqref{eInfe0} implies that  $\mathcal{B}_2 - \mathcal{D}_2 = \Lambda_2^j \leqslant 0$ is sufficient condition to prove \eqref{senBD} which in turn proves \eqref{sen def zeta}. Furthermore, taking the minimum on both sides of \eqref{sen def zeta} and using \eqref{theta 0}, we obtain the sensitivity criteria \eqref{sen def r}. Thus, we prove that \eqref{sensThc} provides a sufficient condition to satisfy \eqref{sen def r} and ensures attack-sensitive residual generation for the proposed scheme. \hfill $\blacksquare$

\bibstyle{arxiv}
\bibliography{ref.bib, Ref_Troy}

\end{document}